\newtheorem{proposition}{Proposition}
\newtheorem{definition}{Definition}
\newcommand{\vabs}[1]{\left\| #1 \right\|}
\newcommand{\pbra}[1]{\left( #1 \right)}
\newcommand{\cbra}[1]{\left\{ #1 \right\}}
\newcommand{\sbra}[1]{\left[ #1 \right]}
\newcommand{\floor}[1]{\left\lfloor #1\right \rfloor }
\newcommand{\Ebb}{\mathbb{E}}
\newcommand{\Ibb}{\mathbb{I}}
\newcommand{\Kcal}{\mathcal{K}}
\newcommand{\Mcal}{\mathcal{M}}
\newcommand{\Scal}{\mathcal{S}}
\newcommand{\Pcal}{\mathcal{P}}
\newcommand{\Omat}{\bm{\mathrm O}}
\newcommand{\Pjmat}{P^{(j)}}
\newcommand{\Qjmat}{Q^{(j)}}
\newcommand{\supp}{\mathrm{supp}}
\newcommand{\Var}[1]{\text{Var} \left( #1 \right)}
\def\be{\begin{eqnarray}}
\def\ee{\end{eqnarray}}
\begin{document}
\title{Overlapped grouping measurement: A unified framework for measuring quantum states}
\date{January 11th 2023}
\author{Bujiao Wu}
\thanks{These two authors contributed equally }
\affiliation{Center on Frontiers of Computing Studies, Peking University, Beijing 100871, China}
\affiliation{School of Computer Science, Peking University, Beijing 100871, China}

\author{Jinzhao Sun}
\thanks{These two authors contributed equally }
\affiliation{Clarendon Laboratory, University of Oxford, Parks Road, Oxford OX1 3PU, United Kingdom}
\affiliation{Center on Frontiers of Computing Studies, Peking University, Beijing 100871, China}

\author{Qi Huang}
\affiliation{School of Physics, Peking University, Beijing 100871, China}
\affiliation{Center on Frontiers of Computing Studies, Peking University, Beijing 100871, China}

\author{Xiao Yuan}
\email{xiaoyuan@pku.edu.cn}
\affiliation{Center on Frontiers of Computing Studies, Peking University, Beijing 100871, China}
\affiliation{School of Computer Science, Peking University, Beijing 100871, China}

\begin{abstract}
    Quantum algorithms designed for 
    realistic quantum many-body systems, such as chemistry and materials,
    usually require a large number of measurements of the Hamiltonian. Exploiting different ideas, such as {importance sampling,} observable compatibility, or classical shadows of quantum states, different advanced measurement schemes have been proposed to greatly reduce the large measurement cost. Yet, the underline cost reduction mechanisms seem distinct from each other, and how to systematically find the optimal scheme remains a critical challenge.  Here, we address this challenge by proposing a unified framework of quantum measurements, incorporating advanced measurement methods as special cases. Our framework allows us to introduce a general scheme~---~overlapped grouping measurement, which simultaneously exploits the advantages of most existing methods. An intuitive understanding of the scheme is to partition the measurements into overlapped groups with each one consisting of compatible measurements. We provide explicit grouping strategies and numerically verify its performance for different molecular Hamiltonians with up to 16 qubits. Our numerical result shows significant improvements over existing schemes. Our work paves the way for efficient quantum measurement and fast quantum processing with current and near-term quantum devices. 
\end{abstract}

\maketitle

\section{Introduction}
\label{sec:intro}
How to efficiently measure a quantum state is a fundamental problem with great practical relevance.
Algorithms tailored for noisy intermediate-scale quantum (NISQ) devices~\cite{preskill2018quantum,bharti2021noisy,endoreview,cerezo2020variational} usually require measuring complicated multi-qubit observables, such as the Hamiltonian~\cite{peruzzo2014variational,o2016scalable,mcclean2017hybrid,li2017efficient,moll2018quantum,Liu2019Variational,wang2020noise,ma2020quantum,mcardle2019variational,Yang17Optimizing,Endo20Variational,Higgott2019variational,Zhou20Quantum,kandala2017hardware,grimsley2019adaptive,PhysRevX.8.011021,hempel2018quantum,XU2021,yuan2020quantum,yuan2019theory,fujii2020deep,nakanishi2019subspace,cirstoiu2019variational,gibbs2021long,commeau2020variational,bravo-prieto2019,huang2019near,mcardle2018quantum,Cao2019Quantum,sun2021perturbative}. 
When decomposing the Hamiltonian into local measurable observables, it may contain a large number of terms. For example, an electronic Hamiltonian with Coulomb interaction generally has $\mathcal O(M^4)$ terms when represented with $M$ fermionic modes~\cite{mcardle2018quantum,Cao2019Quantum,babbush2018low,arute2020hartree}.
The number of terms could already become quite large when we consider the classical limit with a large $M$, where the naive strategy of equally measuring all the observables requires a prohibitively long time. 
We also need error mitigation techniques to suppress calculation errors, which again introduces a large sample overhead with increasing problem size~\cite{li2017efficient,endo2018practical,temme2017error,mcclean2017hybrid,strikis2020learning,bravyi2020mitigating,sun2020mitigating2,dumitrescu2018cloud,otten2019accounting,mcclean2020decoding,giurgica2020digital}.
Therefore, an efficient quantum measurement scheme is crucial for demonstrating a clear and robust quantum advantage with NISQ devices.

Without introducing additional entangling circuits, 
three types of advanced  measurement schemes have been proposed to reduce the measurement cost by exploiting different features of the to-be-measured observables~\cite{kandala2017hardware,verteletskyi2020measurement,huang2020predicting,wecker2015progress,hadfield2020measurements,huang2021efficient,torlai2018neural,choo2020fermionic,torlai2020precise,cotler2020quantum,crawford2021efficient}.
First, observables may have different weight coefficients and we can exploit importance sampling to distribute more measurements to observables with large weights~\cite{wecker2015progress,mcclean2016theory}.
Next, observables may be compatible with some other ones, in the sense that they could be simultaneously measured with the same measurement basis. We can thus group observables into sets of compatible observables using fewer measurements~\cite{kandala2017hardware,verteletskyi2020measurement,vallury2020quantum,izmaylov2019unitary,zhao2020measurement,hempel2018quantum,o2016scalable,crawford2021efficient}.
Another notable but conceptually different scheme considers classical shadows of quantum states using uniformly random local measurements, which are extensively investigated in theoretical and experimental works~\cite{huang2020predicting,aaronson2019shadow,struchalin2021experimental,chen2020robust,zhao2020fermionic,acharya2021informationally,zhang2021experimental}. By properly post-processing the classical measurement outcomes, one can simultaneously obtain the expectation values of any observables. The cost of the original uniform classical shadow scheme~\cite{huang2020predicting} scales exponentially to the number of qubits that the observable non-trivially acts on, and later LBCS and {derandomized CS} methods were further proposed to reduce the measurement cost~\cite{hadfield2020measurements,huang2021efficient}. 
While the optimized classical shadow method outperforms the other two types of methods in numerical experiments, how they are related, and how to find an optimized method that exploits the advantages of all these advanced measurement schemes remain open.

Here, we address these problems in quantum state measurement. We first introduce a unified framework that integrates the advantages of the typically advanced measurement schemes in Sec. \ref{sec:unified_frame}. In particular, we show how to understand the classical shadow method as a generalized observable grouping method. We next introduce the overlapped grouping measurement scheme that simultaneously exploits the features of the importance sampling, observable compatibility, and classical shadows in Sec. \ref{sec:overlapped_group}. While finding the optimal overlapped groups could be computationally challenging, we provide explicit algorithms that output an optimized measurement scheme in Sec.~\ref{sec:explicit_strategy}.  We then numerically benchmark our method in Sec.~\ref{sec:numerical_test} by comparing it to existing advanced works~\cite{verteletskyi2020measurement,huang2020predicting,hadfield2020measurements,huang2021efficient,kandala2017hardware} in estimating expectations of molecular Hamiltonians, as a subroutine in most quantum algorithms. Our numerical result shows prominent improvements over all the others. The proposed method is immediately applicable to currently available and near-future quantum computing experiments. In Sec. \ref{sec:discussion}, we conclude this work and suggest some interesting future investigations.

\section{A unified framework}
\label{sec:unified_frame}
{Now we introduce a framework for measuring hermitian objective observables $\Omat:=\sum_{j} \alpha_j Q^{(j)}$ on a multi-qubit quantum state $\rho$. Here $Q^{(j)}\in \cbra{I,X,Y,Z}^n$ are tensor products of single-qubit Pauli operators, and we also call $Q^{(j)}$ local Pauli strings. }
Naively, we could measure each term $Q^{(j)}$ to obtain the expectation value, $\tr(\rho Q^{(j)})$, and hence the expectation value of the objective observable, $\tr(\rho \Omat)$,
whereas more efficient schemes may be found by exploiting the properties of the objective observables.
 
We first consider observable compatibility. 
Let $Q$ and $R$ be tensor products of single-qubit Pauli operators $Q=\bigotimes_{i=1}^n Q_i$ and $R = \bigotimes_{i = 1}^n R_i$ with $Q_i, R_i\in \cbra{I,X,Y,Z}$. We let $Q\triangleright R$ denote $Q_i = R_i$ or $Q_i = I$ for any $i${, indicating that measuring observable $R$ equivalently measures $Q$.}
We say that $Q$ is \emph{compatible} with $R$ when $Q_i\triangleright R_i$ or $R_i\triangleright Q_i$ for any $i$, meaning that $Q$ and $R$ can be simultaneously measured.
In an extreme case, when each operator $Q^{(j)}$ is compatible with the same Pauli basis $P$, we can simultaneously obtain all the expectation value $\tr(\rho Q^{(j)})$ by measuring one basis $P$.
Nevertheless, a practical case generally consists of observables that are compatible with only a subset of other observables.  Then we need to find a set of Pauli bases $\{P\}$ such that each observable $Q^{(j)}$ is compatible with at least one Pauli basis. 
After choosing the Pauli bases~$\{P\}$, the next question is how to distribute the measurement samples to each basis $P$, which corresponds to the idea of importance sampling. 
Without loss of generality, 
we  select each basis $P$ randomly with the probability $\Kcal(P)$.

Now, suppose that we have determined $\{(P, \Kcal(P))\}$, and we can define an estimator
\be
\hat{v} = \sum_{j} \alpha_{j} f(P,Q^{(j)}, \Kcal) \mu\pbra{P, \supp(Q^{(j)})},
\label{eq:UnifiedRep}
\ee
where $\supp(Q):=\cbra{i|Q_i\ne I}$ is the support of $Q$, $\mu(P,\supp(Q^{(j)})) := \prod_{i\in \supp(Q^{(j)})}\mu(P_i)$, and $\mu(P_i)$ is the single-shot outcome by measuring the $i$th qubit with single-qubit Pauli operator $P_i$. 
Here, $\mu\pbra{P,\supp\pbra{Q^{(j)}}}$ effectively gives measurement results of $Q^{(j)}$ obtained from measuring with the basis $P$. 
$f(P,Q^{(j)}, \Kcal)$ is associated with the probability distribution $\Kcal$ of measurement $P$ and $Q^{(j)}$, which is designed to guarantee that $\hat{v}$ is an unbiased estimation of $\tr\pbra{\rho \Omat}$. It depends on the measurement scheme, and we will show its explicit form for different schemes later.
Assuming that $\Ebb\sbra{f\pbra{P, Q^{(j)}, \Kcal}} = 1$, we will show that $\Ebb\sbra{\hat{v}} = \tr\pbra{\rho \Omat}$, i.e., $\hat{v}$ is an unbiased estimator of the observable expectation $\tr\pbra{\rho \Omat}$ in the following proposition.

\begin{proposition}
 Let $\Omat = \sum_j \alpha_j Q^{(j)}$ where $Q^{(j)}$ are local Pauli strings, and $\hat{v}$ be defined as in Eq. \eqref{eq:UnifiedRep} with $\Ebb\sbra{f\pbra{P, Q^{(j)}, \Kcal}} = 1$, then $\hat{v}$ is an unbiased estimation of $\tr\pbra{\rho \Omat}$.
\label{prop:est_unified}
\end{proposition}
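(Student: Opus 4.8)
The plan is to show $\Ebb[\hat v]=\tr(\rho\Omat)$ by linearity of expectation applied term-by-term to the sum in Eq.~\eqref{eq:UnifiedRep}, reducing the claim to a single statement: for each $j$,
\be
\Ebb\sbra{ f\pbra{P,Q^{(j)},\Kcal}\,\mu\pbra{P,\supp(Q^{(j)})} } = \tr\pbra{\rho\, Q^{(j)}}.
\label{eq:singleterm}
\ee
Granting \eqref{eq:singleterm}, we multiply by $\alpha_j$, sum over $j$, and use $\Omat=\sum_j\alpha_j Q^{(j)}$ to conclude. So the whole proof rests on establishing \eqref{eq:singleterm}.

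To prove \eqref{eq:singleterm} I would condition on the random basis $P$ and use the tower property: $\Ebb[\,\cdot\,]=\Ebb_{P\sim\Kcal}\big[\Ebb_\mu[\,\cdot\mid P]\big]$. For a fixed $P$, the factor $f(P,Q^{(j)},\Kcal)$ is deterministic, so it pulls out of the inner expectation, leaving $f(P,Q^{(j)},\Kcal)\cdot\Ebb_\mu\big[\mu(P,\supp(Q^{(j)}))\mid P\big]$. The key computational fact is that the single-shot product $\mu(P,\supp(Q^{(j)}))=\prod_{i\in\supp(Q^{(j)})}\mu(P_i)$ has the right conditional mean: when $P\triangleright Q^{(j)}$ (i.e., $P$ is a compatible basis for $Q^{(j)}$), the per-qubit measurement outcomes $\mu(P_i)$ are independent with $\Ebb[\mu(P_i)]=\tr(\rho_i P_i)$-type local values whose product over $\supp(Q^{(j)})$ equals $\tr(\rho\,Q^{(j)})$; when $P$ does not cover $Q^{(j)}$, the appropriate convention (built into the definition of $f$, which vanishes or is undefined there) makes the contribution zero. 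One then invokes the hypothesis $\Ebb[f(P,Q^{(j)},\Kcal)]=1$: summing $f(P,Q^{(j)},\Kcal)\,\tr(\rho\,Q^{(j)})$ against $\Kcal(P)$ gives $\tr(\rho\,Q^{(j)})\cdot\Ebb_{P\sim\Kcal}[f(P,Q^{(j)},\Kcal)]=\tr(\rho\,Q^{(j)})$, which is exactly \eqref{eq:singleterm}.

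The main obstacle is handling the support structure cleanly: $\mu(P,\supp(Q^{(j)}))$ only records outcomes on $\supp(Q^{(j)})$, so I must argue that measuring the remaining qubits (in whatever bases $P$ prescribes) does not bias the estimator — this is immediate because those outcomes simply don't appear in the product, but it needs to be stated. I also need the compatibility relation $\triangleright$ to ensure that on $\supp(Q^{(j)})$ the basis $P$ actually measures the correct Pauli $Q^{(j)}_i$, so that $\Ebb_\mu[\mu(P_i)\mid P]$ reproduces the $i$-th tensor factor of $Q^{(j)}$; the role of $f$ is precisely to renormalize for the probability that $\Kcal$ assigns to bases compatible with $Q^{(j)}$, and the assumption $\Ebb[f]=1$ packages this. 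Everything else is routine bookkeeping with independence of single-shot outcomes across qubits.
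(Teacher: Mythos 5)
Your proposal is correct and follows essentially the same route as the paper: linearity of expectation plus the tower property (conditioning on $P$), using that the conditional mean of $\mu\pbra{P,\supp(Q^{(j)})}$ given any compatible basis equals $\tr\pbra{\rho Q^{(j)}}$, and then invoking $\Ebb\sbra{f}=1$. If anything, you are more explicit than the paper's one-line "conditional expectation formula" about why the factorization $\Ebb[f\mu]=\Ebb_P[f]\,\Ebb_{\mu|P}[\mu]$ is legitimate (the conditional mean is constant over the bases where $f$ is nonzero); the only nit is the notational slip of writing $P\triangleright Q^{(j)}$ where the paper's convention for "$P$ is a compatible basis for $Q^{(j)}$" is $Q^{(j)}\triangleright P$.
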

\begin{proof}
By the definition of $\hat{v}$, we have
\begin{align}
    \begin{aligned}
    \Ebb\sbra{\hat{v}} &=\Ebb\sbra{\sum_{j} \alpha_{j} f(P,Q^{(j)}, \Kcal) \mu\pbra{P, \supp(Q^{(j)})}}\\
    &= \sum_j \alpha_j \Ebb_{P}\sbra{f\pbra{P,Q^{(j)}, \Kcal}} \Ebb_{\mu|P}\sbra{\mu\pbra{P, \supp\pbra{Q^{(j)}}}}\\
    &= \sum_{j}\alpha_j \tr\pbra{\rho Q^{(j)}} 
    =\tr\pbra{\rho \Omat}
    \end{aligned}
\end{align}
where the first equation holds because of the conditional expectation formula.
\end{proof}

\begin{figure}
    \centering
    \includegraphics[width = 1\textwidth]{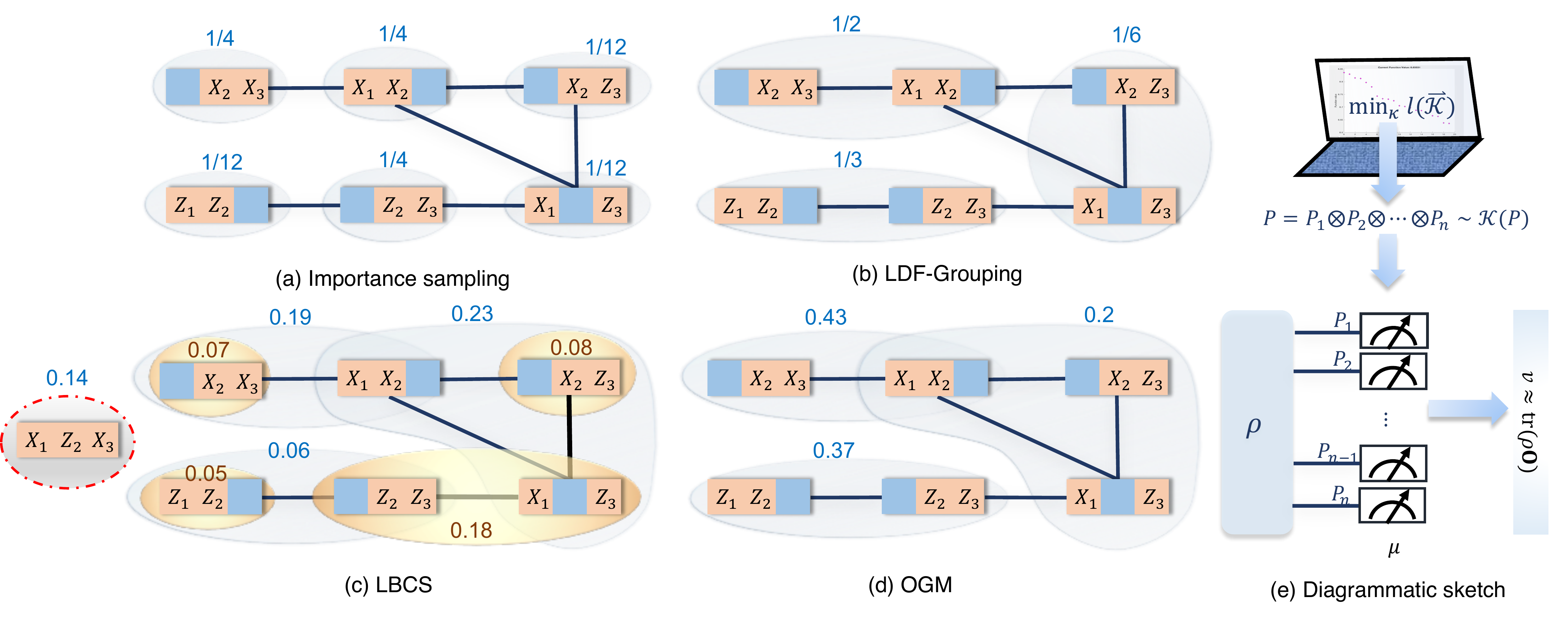}
    \caption{
    An illustrative example of the (a) $l_1$-sampling, (b) LDF-grouping, (c) LBCS, and (d) OGM methods. A vertex represents an observable, and an edge is connected between two vertices when they are compatible. 
    The light blue (yellow) set represents a Pauli measurement basis. Different measurement schemes correspond to different light blue (yellow) sets with different probabilities.
    Here we consider an example of measuring $\Omat = X_1X_2/4 + X_2X_3/4 + X_2Z_3/12 + Z_2Z_3/4 + Z_1 Z_2/12 + X_1Z_3/12$ with a 3-qubit state $\ket{\psi} = \ket{000}/\sqrt{2} + \ket{111}/\sqrt{2}$. The variance for importance sampling,  LDF-grouping, LBCS and OGM algorithms of this instance are $0.90, 0.56, 0.74, 0.50$, respectively.
    (e) Schematic diagram of the unified framework.
    We first determine the set of Pauli basis $\{P\}$ and the probability distributions $\{\Kcal(P)\}$ using  Algorithm~\ref{alg:OverlapSet}.
    We next measure the quantum state $\rho$ with a different Pauli basis drawn from the distribution $\Kcal$ and post-process the measurement outcomes to obtain the estimation of $\tr(\rho\Omat)$.
  }
    \label{fig:GraphComp}
\end{figure}

In the following, we give  explicit expressions of $f(\cdot, \cdot, \cdot)$ for different existing measurement schemes.

(1) \emph{Importance sampling.}
{The strategy of the importance sampling measures each observable $\Qjmat$ independently with the basis $\Pjmat= \Qjmat$,  and the associated probability distribution is determined  by the weight of the observable as $\Kcal(\Pjmat) = |\alpha_j|/\vabs{\bm \alpha}_{1}$ with $\vabs{\bm \alpha}_{1} = \sum_{j =1}^{m} \abs{\alpha_{j}}$ being the $l_{1}$ norm of $\bm \alpha = \pbra{\alpha_{1}, \ldots, \alpha_{m}}$, }
and $f$ defined as
 \be
 f_{l_1}({P, Q^{(j)}, \Kcal}) =  
 \Kcal(P)^{-1} \delta_{P, Q^{(j)} }.
 \ee
It is easy to check that $\Ebb_P\sbra{f_{l_1}\pbra{P,Q^{(j)}, \Kcal}} = 1$.
This method is also referred to as $l_1$-sampling, since the sampling probability is associated with the $l_1$-norm of $Q^{(j)}$.
The $l_1$-sampling needs $\mathcal O({{ \vabs{\bm \alpha}_1^2}/{\varepsilon^2}})$ copies of quantum states $\rho$ to approximate the expectation of $\Omat$ with an additive error
$\varepsilon$. The number of copies is obtained from Chebyshev inequality, and we leave details of error analysis in Appendix~\ref{app:UpperAlgs}.

(2) \emph{Grouping.} The grouping method exploits observable compatibility by partitioning the observables $\mathcal O = \{Q^{(j)}\}$ into several non-overlapped sets $\mathcal S = \cbra{\bm e_1, \ldots, \bm e_s} $ such that $ \bm e_j\cap  \bm e_{j'}=\varnothing ~(\forall j\neq j'$), $\cup_j \bm e_j = \mathcal O$. It also requires that observables in the same set are compatible with each other, such that there exists a measurement basis $\Pjmat$ satisfying $Q \triangleright \Pjmat,\forall Q\in \bm e_j$. Let $\Kcal(\Pjmat)$ be the probability that $\Pjmat$ is selected. It could be optimized using the importance sampling  by setting $\Kcal(\Pjmat)$ proportional to the total weight of the observables in the set $\Pjmat$ as $\Kcal(\Pjmat) = \vabs{\bm e_j}_1/ \vabs{\bm \alpha}_1$. 
Here the weight of a set $\bm e_{j}$ is defined as the $l_{1}$-norm of the weights of the observables in this set as $\vabs{\bm e_j}_1=\sum_{Q^{(k)}\in \bm e_j} \abs{\alpha_k}$. The function $f$ for the optimized grouping method is 
\begin{equation}
    f_{\rm Group}\pbra{\Pjmat, Q^{(k)},\Kcal} = \Kcal(\Pjmat)^{-1} \delta_{Q^{(k)} \in \bm{e}_j}.
\end{equation}
where $\delta_{P,Q^{(j)}}$ equals one if $P = Q^{j}$ and equals zero otherwise. Then we have 
\begin{equation}
    \Ebb_P\sbra{f_{\rm Group} \pbra{P, Q^{(k)}, \Kcal}} = \sum_{j}\Kcal\pbra{P^{(j)}}\Kcal\pbra{P^{(j)}}^{-1}[Q^{(j)}\in \bm e_j] = 1.
\end{equation} 
The number of copies of quantum states $\rho$ needed here is associated with the grouping strategy, and we give an explicit upper bound for the number of copies requiring to approximate $\tr\pbra{\rho\Omat}$ with an additive error
$\varepsilon$ for any given grouping strategy in Appendix~\ref{app:UpperAlgs}.
 
{Finding the exact minimum number of groups has been proved to be NP-hard~\cite{jena2019pauli}}. Several heuristic algorithms, such as the \emph{largest degree first} (LDF) method~\cite{verteletskyi2020measurement} have been proposed to give approximate solutions. We refer to Appendix \ref{app:LDFGroup} for a detailed implementation of the heuristic grouping method.

(3) \emph{Classical shadows.} 
The conventional classical shadow (CS) method measures the quantum state with a random Pauli basis, which corresponds to a Pauli string $P \in \{X,Y,Z\}^{\otimes n}$ within our framework. The original scheme in the seminal work~\cite{huang2020predicting} considers a uniform probability   $\Kcal(P)=1/3^n$ whereas the locally biased classical shadow (LBCS) method~\cite{hadfield2020measurements} assumes a general product distribution $\Kcal(P) = \prod_i \Kcal_i (P_i)$. 
The function $f_{\rm CS}$ is 
\begin{equation}
    f_{\rm CS}\pbra{P, \Qjmat,\Kcal} = \prod_i f_i\pbra{P_i, \Qjmat_i,\Kcal_i},
    \label{eq:cs_f}
\end{equation}
 with $f_i\pbra{P_i, \Qjmat_i,\Kcal_i}= \delta_{Q_i,{I}} + \Kcal_i(P_i)^{-1} \delta_{\Qjmat_i, P_i}$. One can check that \begin{equation}
 \Ebb_{P}\sbra{f_{\rm CS}\pbra{P, \Qjmat, \Kcal}} = \prod_{i} \sum_{P_i}\Kcal\pbra{P_i}f_i\pbra{P_i, \Qjmat_i,\Kcal_i}  = 1.
 \end{equation}
 The number of copies $\rho$ to approximate the expectation of $\Omat$ with an additive error
$\varepsilon$ and $1-\delta$ success probability is bounded to $O\pbra{3^{\text{local(\Omat)}}\pbra{\sum_{j=1}^m \alpha_j}^2/(\delta\varepsilon^2)}$, where local$\pbra{Q^{(j)}} := \# \cbra{k |Q^{(j)}_k \ne \Ibb}$ is the maximum number of qubits $k$ such that $Q^{(j)}_k$ is not identity for all of $j$. We refer to Appendix \ref{app:UpperAlgs} for detailed proof.
 
{(4) \emph{Derandomized CS.} Recently, Huang, Kueng and Preskill~\cite{huang2021efficient} proposed a derandomized classical shadow algorithm, which shows great practical performance compared with conventional classical shadow methods. 
The derandomization algorithm first assigns a collection of $T$ completely random $n$-qubit Pauli measurements, and then derandomizes the process for sampling  measurement set $\mathcal{P}$ by greedily and adaptively choosing current $P^{(j)}$ in the $j$-th step, provided derandomized measurements $P^{(1)}\ldots P^{(j-1)}$, that minimizes the conditional expected value over all remaining random measurement assignments. 
Given all the selected measurements $\mathcal{P}$, the estimator of the derandomized CS algorithm can be expressed as
\begin{equation}
\hat{v} = \sum_{j}\alpha_j\frac{1}{\sum_{P\in\Pcal} \delta_{Q^{(j)}\triangleright P}}\sum_{P:Q^{(j)}\triangleright P}\mu\pbra{P, \supp\pbra{Q^{(j)}}}
    \label{eq:derandCS}
\end{equation}
within our framework. Now we give a sampling version for this method and prove that it can also be unified to the unified framework, as shown in Eq.~\eqref{eq:UnifiedRep}.

For the measurement sequence $P^{(1)},\ldots, P^{(T)}$, suppose the frequency of $P^{(k)}$ be $t_k$, let $p_k = t_k/T$ be the probability to select $P^{(k)}$, and denote this distribution as $\Kcal$. Then proposition \ref{prop:est_unified} still holds as long as for any observable $Q^{(j)}$ where $j\in[m]$, there exists a measurement $P^{(k)}$ in the measurement sequence such that $Q^{(j)}\triangleright P^{(k)}$.
In this case, we can rewrite Eq. \eqref{eq:derandCS} as
\begin{equation}
    \hat{v} = \sum_{j}\alpha_j \frac{1}{p\pbra{Q^{(j)}}} \mu\pbra{P^{(k)},\supp\pbra{Q^{(j)}}}
    \label{eq:revise_derand}
\end{equation}
for a selected measurement $P^{(k)}$, where $p\pbra{Q^{(j)}}$ is the probability to measure $Q^{(j)}$. It is easy to check that $\Ebb\sbra{f\pbra{P^{(k)},Q^{(j)}, \Kcal}} = \Ebb\sbra{ \frac{1}{p\pbra{Q^{(j)}}}} = 1$, and hence 
$\Ebb\sbra{\hat{v}} = \tr\pbra{\rho \Omat}$.
}
 
In Fig.~\ref{fig:GraphComp}, we show explicit examples of the above three typical methods. While Ref.~\cite{hadfield2020measurements} showed the superiority of the LBCS method, we can see that LBCS essentially exploits an alternative view of observable compatibility, which is captured by the unified framework. However, since the measurements are selected locally in LBCS, we have to measure redundant observables, such as $X_1Z_2X_3$ in our example {of Fig. \ref{fig:GraphComp}}. This term has no contribution to the objective observable, but is still assigned a certain number of measurements. For a general observable, many  measurements might be assigned to these redundant terms, and thus makes LBCS non-optimal or even costly with increasing system size.

\section{Overlapped grouping measurement}
\label{sec:overlapped_group}
Here, we propose a new scheme that
exploits the advantages of the aforementioned typical measurement methods. 
We first introduce the concept of \emph{overlapped grouping} and then give a comparison of this strategy and other existing strategies.

\subsection{Overlapped grouping}

\begin{definition}[Overlapped grouping]
For a set of observables $\mathcal O=\{Q^{(j)}\}$, the collection $\mathcal S = \{\bm e_1, \ldots, \bm e_s\}$ is an overlapped grouping when  $\cup_j \bm e_j = \mathcal O$ with corresponding measurements $\{P^{(1)}, \ldots, P^{(s)}\}$ satisfying $Q \triangleright \Pjmat,\forall Q\in \bm e_j$.
\label{def:OverlapGroup}
\end{definition}
\noindent 
Suppose that we have determined the 
probabilities $\{\Kcal(P^{(j)})\}$, and we can define a new function $f$ for the overlapped grouping as
\begin{equation}
    f_G(P,Q, {\Kcal}) =  \chi(Q)^{-1} \delta_{Q \triangleright P},
\label{eq:DefFuncOGM}
\end{equation}
{where $\chi(Q) := \sum_{P:Q \triangleright P} \Kcal(P)$ represents the probability that $Q$ is effectively measured with the basis $P$.
Now we can define}
\begin{equation}\label{Eq:vgdef}
    \hat{v}_G = \sum_{j} \alpha_{j} f_G(P,Q^{(j)}, \Kcal) \mu\pbra{P, \supp(Q^{(j)})}
\end{equation}
as an unbiased estimator of $\tr\pbra{\rho \Omat}$. Intuitively, an unbiased estimation of $\tr\pbra{\rho Q^{(j)}}$ can be generated from the measured results of $Q^{(j)}$ divided by its measured probability.
From the definition of $f_G$,  we have $
         \Ebb_P[f_G(P,Q,\Kcal)] = \sum_{P:Q\triangleright P}\Kcal(P)\frac{1}{\chi(Q)}=1$, and
hence $\hat{v}_G$ is also an unbiased estimation of $\tr\pbra{\rho \Omat}$ by Proposition~\ref{prop:est_unified}. To summarise, an overlapped grouping measurement (OGM) scheme works as follows.

\begin{itemize}
    \item [S1.] Find the overlapped sets $\mathcal S$  with corresponding measurements $\{P^{(1)}, \ldots, P^{(s)}\}$.
    \item [S2.] Find the probability distribution $\{\Kcal(\Pjmat)\}$.
    \item [S3.] Measure the quantum state with a randomly generated basis $P$ and process the outcomes with Eq.~\eqref{eq:UnifiedRep} and~\eqref{eq:DefFuncOGM}. 
\end{itemize}

A specific OGM scheme is determined by the choice of sets $\mathcal{S}$ (equivalently $\{P^{(j)}\}$) and the
probability distribution $\{\Kcal(P^{(j)})\}$.
To quantify the performance of the scheme, we consider the variance of the estimator, as shown in the following proposition. 
\begin{proposition}
The variance of $\hat{v}_G$ defined in Eq.~\eqref{Eq:vgdef} is
\begin{equation}
    \mathrm{Var}(\hat{v}_G)=\sum_{j,k} \alpha_j\alpha_k g(Q^{(j)},Q^{(k)})\tr\pbra{\rho Q^{(j)}Q^{(k)}}{-\tr\pbra{\rho\Omat}^2}
    \label{eq:VarOGM}
\end{equation}
where $g(Q^{(j)},Q^{(k)}) := \chi^{-1}(Q^{(j)}) \chi^{-1}(Q^{(k)}) {\sum_{P: Q^{(j)}\triangleright P \land Q^{(k)} \triangleright P} \Kcal(P)}{}$.
\label{prop:OverlapGrouping}
\end{proposition}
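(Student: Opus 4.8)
The plan is to compute the variance directly from its definition $\mathrm{Var}(\hat{v}_G) = \Ebb[\hat{v}_G^2] - (\Ebb[\hat{v}_G])^2$, using that the second term is $\tr(\rho\Omat)^2$ by Proposition~\ref{prop:est_unified} (since $\Ebb_P[f_G(P,Q,\Kcal)]=1$). So the real work is to evaluate $\Ebb[\hat{v}_G^2]$.

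First I would expand the square of $\hat{v}_G$ from Eq.~\eqref{Eq:vgdef}, obtaining a double sum $\sum_{j,k}\alpha_j\alpha_k f_G(P,Q^{(j)},\Kcal)f_G(P,Q^{(k)},\Kcal)\,\mu(P,\supp(Q^{(j)}))\,\mu(P,\supp(Q^{(k)}))$, and then take the expectation. As in the proof of Proposition~\ref{prop:est_unified}, I would split the expectation using the conditional expectation formula: first over the measurement outcomes $\mu$ given the basis $P$, then over $P\sim\Kcal$. The key observation for the inner expectation is that $f_G(P,Q^{(j)},\Kcal)f_G(P,Q^{(k)},\Kcal) = \chi^{-1}(Q^{(j)})\chi^{-1}(Q^{(k)})\,\delta_{Q^{(j)}\triangleright P}\,\delta_{Q^{(k)}\triangleright P}$ is a deterministic function of $P$ (it does not depend on $\mu$), so it pulls out of the $\Ebb_{\mu|P}$, leaving $\Ebb_{\mu|P}[\mu(P,\supp(Q^{(j)}))\mu(P,\supp(Q^{(k)}))]$.

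The next step is to show that, conditioned on $Q^{(j)}\triangleright P$ and $Q^{(k)}\triangleright P$, this product of single-shot outcomes is an unbiased estimator of $\tr(\rho Q^{(j)}Q^{(k)})$. Since $\mu(P_i)^2$ for a Pauli measurement on qubit $i$ equals $1$ (eigenvalues $\pm 1$), the product $\mu(P,\supp(Q^{(j)}))\mu(P,\supp(Q^{(k)})) = \prod_{i\in\supp(Q^{(j)})\triangle\supp(Q^{(k)})}\mu(P_i)$ collapses onto the symmetric difference of the supports, which is exactly $\supp(Q^{(j)}Q^{(k)})$ when both are compatible with $P$; on that support $Q^{(j)}Q^{(k)}$ agrees with $P$, so the outcome measures $Q^{(j)}Q^{(k)}$, giving $\Ebb_{\mu|P}[\cdots] = \tr(\rho Q^{(j)}Q^{(k)})$ whenever the indicator is nonzero. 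Then the outer expectation over $P$ contributes $\sum_{P:Q^{(j)}\triangleright P\,\land\,Q^{(k)}\triangleright P}\Kcal(P)$, and combining with the $\chi^{-1}\chi^{-1}$ prefactor yields exactly $g(Q^{(j)},Q^{(k)})$. Summing over $j,k$ and subtracting $\tr(\rho\Omat)^2$ gives Eq.~\eqref{eq:VarOGM}.

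I expect the main obstacle to be the careful bookkeeping in the third step: verifying that when $Q^{(j)}$ and $Q^{(k)}$ are both compatible with $P$ (in the $\triangleright$ sense, i.e. $P$ acts nontrivially and agrees with each where they are nontrivial), the single-shot product genuinely yields an unbiased estimate of $\tr(\rho Q^{(j)}Q^{(k)})$ — one must check that $Q^{(j)}Q^{(k)}$ is itself a (signed) Pauli string compatible with $P$, handle the cancellation of squared outcomes on the overlap of supports, and confirm there is no stray phase issue (any $\pm i$ factor from $Q^{(j)}_i Q^{(k)}_i$ on the overlap is killed since those qubits lie outside $\supp(Q^{(j)}Q^{(k)})$ and the $\pm 1$ there squares away — but if one is being fully rigorous one should note that $Q^{(j)}Q^{(k)}$ may carry an overall $\pm 1$ or $\pm i$ global phase, which is absorbed consistently into the $\tr(\rho Q^{(j)}Q^{(k)})$ term). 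Everything else is a routine application of the conditional-expectation argument already used for Proposition~\ref{prop:est_unified}.
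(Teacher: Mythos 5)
Your proposal is correct and follows essentially the same route as the paper: expand $\mathrm{Var}(\hat{v}_G)=\Ebb[\hat{v}_G^2]-\Ebb[\hat{v}_G]^2$, condition on the basis $P$, pull out the deterministic factor $f_G(P,Q^{(j)},\Kcal)f_G(P,Q^{(k)},\Kcal)$ whose expectation over $P$ gives $g(Q^{(j)},Q^{(k)})$, and use that $\mu(P,\supp(Q^{(j)}))\mu(P,\supp(Q^{(k)}))$ collapses onto the symmetric difference of supports, i.e.\ $\supp(Q^{(j)}Q^{(k)})$, so the inner expectation is $\tr(\rho Q^{(j)}Q^{(k)})$. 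Your extra care about phases is harmless but unnecessary here: when both $Q^{(j)}\triangleright P$ and $Q^{(k)}\triangleright P$, the two operators agree (and equal $P_i$) on the overlap of their supports, so $Q^{(j)}Q^{(k)}$ is a phase-free Pauli string, and when no common compatible $P$ exists the term vanishes since $g=0$.
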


\begin{proof}
Since Var$(\hat{v}_G) = \Ebb\sbra{\hat{v}_G^2} - \Ebb\sbra{\hat{v}_G}^2$, Proposition \ref{prop:OverlapGrouping} follows from the following equations
\begin{equation}\label{eq:VarMu}
    \begin{aligned}
    &\Ebb_{P} [f_G(P,Q,\Kcal) f_G(P,R,\Kcal)] = \frac{\sum_{P:Q\triangleright P \land R\triangleright P} \Kcal(P)}{\chi(Q)\chi (R)},\\
    &\Ebb_{\mu(P|Q,R)}[\mu(P, \supp(Q))\mu(P,\supp(R))] = \tr(\rho QR),
\end{aligned}
\end{equation}
where the first equation holds directly by its definition and the second equation holds because we have   $\mu(P,\supp(Q))\mu(P,\supp(R)) = \mu(P,\supp(Q\oplus R)) = \mu(P,\supp(QR))$ when   $Q,R\in \{I,X,Y,Z\}^n$. 
\end{proof}
The variance determines the sample complexity. In particular, we need the total number of measurements $T\geq \text{Var}\pbra{\hat{v}_G}/\pbra{\delta\varepsilon^2}$ samples to achieve $\Pr\sbra{\abs{\hat{v}_G - \tr\pbra{\rho \Omat}}\geq \varepsilon}\leq \delta$ with error $\varepsilon\ge 0$ and failure probability $\delta\in[0,1]$.

\subsection{Illustration and comparison with other measurement schemes}
\label{subsec:compOGM}

We illustrate the differences between our OGM scheme and other measurement schemes in Fig.~\ref{fig:GraphComp}. As illustrated in Fig. \ref{fig:GraphComp}, importance sampling selects an observable in each iteration, and measures the prepared state with the sampled observables to obtain the estimations associated with this observable. Grouping strategy leverages the compatible property of the observables, and measures the observables that are compatible jointly. Nevertheless, it only exploits a very limited space of the full probability space for $4^n$ possible measurements in $\cbra{I,X,Y,Z}^n$. Moreover, for the sake of the grouping determination  using the heuristic strategy, an observable can not arise in two different sets. Therefore, it might be inefficient in leveraging of the measurements. Classical shadow method finds the optimized probabilities of each qubit of the measurement, and it also measures the observables jointly. However, since the CS method independently generates Pauli operators on each qubit, it will generate useless measurements, such as $X_1Z_2X_3$ in Fig.~\ref{fig:GraphComp}(c). 
As a comparison, for any measurement $P$ in set $\cbra{P^{(1)},\ldots, P^{(s)}}$ generated from the OGM scheme, there exists at least an observable $Q^{(j)}$, such that $Q^{(j)}\triangleright P$.

The overlapped grouping measurement framework defined as in Eq. \eqref{eq:DefFuncOGM} without an explicit assignment of $\cbra{P^{(k)}}$ and $\Kcal$ covers importance sampling, LDF Grouping, CS and the ``probabilistic version'' of the {derandomized CS} algorithm as in Eq. \eqref{eq:revise_derand}.
The importance sampling, grouping and CS algorithms can be regarded as a special OGM framework with some restrictions for the distribution of measurements $\{(P,\Kcal(P))\}$.
We note that in our method $f_G(P,Q, {\Kcal}) =  \chi(Q)^{-1} \delta_{Q \triangleright P}$ may lead to more effective data post-processing since it exploits all the compatible properties of observables in the overlapped sets, and the ``probabilistic version'' estimation in Eq. \eqref{eq:revise_derand} of the {derandomized CS} algorithm also belongs to this scope, since it is equivalent to the estimation expression with the OGM grouping strategy in Eq. \eqref{Eq:vgdef}.

We can also observe that OGM is strictly better than the classical shadow method.
The OGM scheme reduces to the CS method when we choose $\Pjmat\in \{X,Y,Z\}^{\otimes n}$ and restrict 
the probability distribution $\Kcal(\Pjmat)$ 
to have a local product structure on different qubits. We remark that OGM will not measure redundant observables as that in the local shadow methods.

The estimator in Eq.~\eqref{eq:derandCS} indicates that derandomization utilizes the compatible properties of observables when measuring on the predetermined basis. Once the measurement bases are determined, it could be regarded as a special overlapped grouping method.

In OGM, challenges remain to (1) determine the collection $\mathcal S$ and (2) find the probability distributions $\Kcal$. Similar to the case of grouping method, finding the optimal overlapped groups given the objective observables is also NP-hard.
{In Sec.~\ref{sec:explicit_strategy}, we develop an explicit strategy to determine an approximate solution $\mathcal S$
by leveraging a greedy algorithm based on the weights of the observables. }
To find the probability distribution $\Kcal$, we apply an optimization procedure to adaptively search for the solution that minimizes the estimator variance.

\section{Explicit grouping strategies}
\label{sec:explicit_strategy}
We show in Algorithm~\ref{alg:OverlapSet} our strategy to determine the overlapped sets $\bm e_1, \ldots, \bm e_s$ and the associated probability $\Kcal_j:=\Kcal(P^{(j)})$. The main idea is that under the premise of covering all the objective observables{, we} add an observable which has not been accessed into a new set, and add all compatible observables into this set. 
We give priority to observables with larger absolute weights since it has more contributions {to} the estimation. We note that different sequences to add a new observable into an existing set will influence the structure of sets and the number of sets. Algorithm~\ref{alg:OverlapSet} provides a grouping strategy by adding a new observable by its importance (weight) and trying to reduce the number of sets as far as possible.
Meanwhile, the procedure guarantees that whenever an observable $Q^{(j)}$ is compatible with the measurement $P^{(i)}$, it is in the set $\bm e_i$. See Appendix \ref{app:OGMGroup} for an alternative strategy, which has slightly better performance however based on a more dedicated optimization procedure.

\SetNlSty{textbf}{}{\quad}
\IncMargin{1em}
\begin{algorithm}
\SetKw{KwTo}{to}
\SetKwFor{While}{while}{:}{}
\SetKwFor{For}{for}{:}{}
\SetKwIF{If}{ElseIf}{Else}{if}{:}{elif}{else:}{}
\SetKwInOut{Input}{Input}
\SetKwInOut{Output}{Output}
\Input{$n,m$ and $Q^{(1)},\ldots, Q^{(m)}, \alpha_1,\ldots, \alpha_m$.}
\Output{$\{P^{(s)}\}$ with initial probabilities $\{\Kcal_s\}$.}
\emph{Sorting $\{Q^{(j)}\}$ to the descending order according to their weights $\abs{\alpha_j}$}\;
$j\leftarrow 1$ and $s\leftarrow 1$\;
\While{$\exists Q^{(j)}$ that is not in any sets}{
\qquad\emph{ Let $Q^{(j)}$ be the first observable in the sorted sequence which has never appeared in any sets and add it into a new set $\bm e_s$}\;
\qquad\emph{$s\leftarrow s+1$}\;
\qquad\emph{Initialize the measurement of $\bm e_s$ as $P^{(s)}\leftarrow Q^{(j)}$}\;
\qquad\For{$k\leftarrow j+1$ \KwTo $m$ }{

\qquad\qquad\If{observable $Q^{(k)}$ is compatible with $P^{(s)}$}{
\qquad\qquad\qquad\emph{Add $Q^{(k)}$ into set $\bm e_s$}\;
\qquad\qquad\qquad\emph{Update $P^{(s)}$ to $ P^{(s)} \bigvee Q^{(k)}$}
\tcc*{$P = Q\bigvee R$ is defined as  $P_i = Q_i$ if $Q_i = R_i$ and $P_i = Q_iR_i$ otherwise.}
    }
}
\qquad\emph{Let the initial probability of $P^{(s)}$ be the summation of the weight of all observables in this set}\;
\qquad\For{$k\leftarrow 1$ \KwTo $j-1$}{
\qquad\qquad\If{observable $Q^{(k)}$ is compatible with $P^{(s)}$}{
\qquad\qquad\qquad\emph{Add $Q^{(k)}$ into set $\bm e_s$, and update $P^{(s)}$ to $ P^{(s)} \bigvee Q^{(k)}$}\;
}
}
}
\caption{Overlapped set generation.
}
\label{alg:OverlapSet}
\end{algorithm}
\DecMargin{1em}

{The algorithm outputs the measurements $\{P\}$ with non-optimized probabilities $\{\Kcal\}$. 
Here the initial probability of $P^{(s)}$ is not chosen as the weight of $\bm e_{s}$ since we wish to distinguish the importance of different sets and give more priority to the sets which are generated in front of $\bm e_s$.
We can then optimize $\{\Kcal\}$ to further minimize the estimator variance. 
However, the variance $\textrm{Var}(\hat{v}_G)$ in Eq.~\eqref{eq:VarOGM} depends on the input state $\rho$, which could be unknown in general.
Alternatively, we consider the diagonal approximation of} {Var$\pbra{\hat{v}_G}$} (see similar techniques in Ref. \cite{hadfield2020measurements}), which is explicitly expressed as
\begin{equation}
    l(\vec \Kcal)= \sum_{j} \frac{\alpha_j^2}{\chi({Q^{(j)}, \vec \Kcal})},
    \label{eq:diag_var}
\end{equation}
where $\chi(Q, \vec \Kcal) = \sum_{P:Q \triangleright P} \Kcal(P)$ and $\vec \Kcal := \pbra{\Kcal_1, \cdots, \Kcal_s}$ represents all the corresponding probabilities. 
We give the mathematical supports that why we utilize $l(\vec \Kcal)$ as the cost function in Appendix \ref{app:CFandVariance}.
There are several advantages of using the diagonal approximation $l(\vec \Kcal)$ instead of the actual variance~---~(1) independence of the quantum state, (2) fast classical evaluation, (3) including dominant contribution to the variance since $\tr\pbra{\rho Q^{(j)}Q^{(k)}}< \tr\pbra{\rho Q^{(j)}Q^{(j)}} = 1$ when $j\ne k$.  Therefore, we could instead regard $l(\vec \Kcal)$ as the cost function and minimize it by optimizing over $\vec \Kcal$. 
From the expression of $l\pbra{\vec \Kcal}$ in Eq.~\eqref{eq:diag_var}, we see the cost function is not convex in $\vec \Kcal$ and hence there is no closed minimum solution.
An estimation can be generated by searching for a local minimum solution of the cost function in Eq.~\eqref{eq:diag_var}. To further give a better estimation and avoid being trapped into bad local minima, we slightly revise the cost function, as shown in the following subsection.

\subsection{Optimization process}
For the optimization process of the OGM method, 
we will further speed it up by adaptively deleting the groups that have very small initial probabilities until the cost function stops decreasing with the disturbance.
Note that after cutting down the groups with small weights, some observables with small coefficients will disappear in the cost function. Therefore, we adjust the final cost function as
\be
l(\vec \Kcal) = \sum_{Q^{(j)}\in \Scal} \frac{\alpha_j^2}{\chi\pbra{Q^{(j)}}} + \sum_{Q^{(j)}\not\in \Scal }\alpha_j^2T,
\label{eq:lfFinal}
\ee
where $T$ is the total number of samples, $Q^{(j)}\in \Scal$ if there exists a set $\bm e$ such that $Q^{(j)}\in \bm e$, and $\sum_{Q^{(j)}\not\in \Scal}$ is the penalty caused by deleting some sets. 
The selection of the final cost function in Eq. \eqref{eq:lfFinal} is inspired by the relationship between the variance and the number of samples. 
More specifically, Chebyshev inequality $T\geq \text{Var}(\hat{v})/\pbra{\delta\varepsilon^2}$ indicates that Var$(\hat{v})$ is linear in $T$. Hence we introduce $\alpha_j^2 T$ to compensate the initial error $\varepsilon_0$ for excluding the observable $Q^{(j)}$.
The initial error $\varepsilon_0 = \abs{\sum_{j:Q^{(j)}\not \in \Scal} \alpha_j \tr\pbra{\rho Q^{(j)}}}$ implies biases of our estimation.
We could search for an optimized $T$ in a real experiment with a small-scaled input size with an initial $T_0$.
Since the cost function in Eq.~\eqref{eq:lfFinal} is not convex, we could find a local minimum solution using the nonconvex optimization methods. 

Since our OGM method assumes measurements drawn from the probability distribution, the measurement accuracy may fluctuate. We will derandomize the scheme by fixing almost all of the choices of measurements $P$ in the next subsection. 

\subsection{Sampling strategy}

Suppose that we have determined the measurement basis sets $\bm e_1, \ldots, \bm e_s$ and the optimized probability distribution $\{(P,\Kcal (P))\}$ using the above strategy. In practical computation, we usually have constraints on the maximum allowed number of measurements. In what follows, we provide a partially derandomized strategy with the given number of measurements $T$.
For the $j$th measurement $P^{(j)}$ with sampling probability $\Kcal_j$, we choose $\floor{T \Kcal_j}$ number of measurements for $P^{(j)}$, and select an additional one $P^{(j)}$ with probability $T\Kcal_j - \floor{T\Kcal_j}$, as shown in the Algorithm~\ref{alg:ParDerand}. 

\IncMargin{1em}
\begin{algorithm}[h]
\SetKw{KwTo}{to}
\SetKwFor{For}{for}{}{}
\SetKwInOut{Input}{Input}
\SetKwInOut{Output}{Output}
\Input{Measurements $ P^{(1)},\ldots, P^{(s)}$, which follows distribution $\Kcal = \cbra{\Kcal_1, \ldots, \Kcal_s}$, the number of selected measurements $T$.}
\Output{List $\Mcal$ which contains selected measurements.}
\emph{Sorting measurements $P^{(1)},\ldots, P^{(s)}$ to descending order in terms of its probabilities}\;
\For{$j\leftarrow 1$ \KwTo $s$}{
\qquad \emph{Add $\floor{\Kcal_jT}$ number of $P^{(j)}$ into list $\Mcal$}\;
\qquad \emph{$\Kcal_j\leftarrow \Kcal_j T - \floor{\Kcal_jT}$}\;
}
\For{$j\leftarrow 1$ \KwTo $s$ and size$(\Mcal)<T$}{
\qquad \emph{Add one $P^{(j)}$ into list $\Mcal$ with probability $\Kcal_j$}\;
}
\Return{$\Mcal$}\;
\caption{{Partially derandomized sampling process for OGM.}}
\label{alg:ParDerand}
\end{algorithm}
\DecMargin{1em}
Observe that the estimation $\hat{v}_G$ does not rely on the arrangement of measurements. Let $\Mcal$ be the list for the selected measurements, and note here we allow a measurement to appear more than once in the list $\Mcal$. 
It is easy to check the expectation number of samples for $P^{(j)}$ is equal to $\Kcal_jT$ with Algorithm \ref{alg:ParDerand}.
Note that in Algorithm~\ref{alg:ParDerand}, the number of sampled measurements may not exactly equal $T$, although it is close to $T$ if $s\ll T$. Hence in the numerical experiment we additionally add $P^{(j)}$ to $\Mcal$ for the measurement $P^{(j)}$ satisfies $\Kcal_jT<1$ and size$(\Mcal)< T$ in the descending sequence sorted in Step (1) of Algorithm~\ref{alg:ParDerand} if the size of $\Mcal$ is less than $T$. We provide detailed discussions on the variance of the partially derandomized strategy in Appendix~\ref{appendix:derand}.

\section{Numerical tests}
\label{sec:numerical_test}
In this section, we numerically demonstrate the overlapped grouping measurement algorithms for the energy estimation of molecular systems, and compare our methods with other advanced measurement strategies, including LDF-grouping, locally biased classical shadows, and derandomized classical shadows. {We do not include importance sampling method in the comparison since its performance is worse than others.}
{Algorithm \ref{alg:TotalOGM} gives the full estimation process for the OGM algorithm.}
\IncMargin{1em}
\begin{algorithm}[h]
\SetKw{KwTo}{to}
\SetKwFor{For}{for}{}{}
\SetKwInOut{Input}{Input}
\SetKwInOut{Output}{Output}
\Input{Hamiltonian $\Omat = \sum_i \alpha_i Q^{(i)}$, the number of samples $T$.}
\Output{Estimation $v$.}
\emph{Generate the initial measurement distribution $\cbra{P^{(s)}, \Kcal_s}$ with Algorithm \ref{alg:OverlapSet} or Algorithm \ref{alg:OverlapSetV2}}
\label{step:first_ogm}\;
\emph{Uniformly randomly pick $A = 10$ groups of distributions $\cbra{\Kcal_s}$ proportional to $[P^{(s)}, P^{(s)} + \max_s P^{(s)}]$, and choose one with minimum cost function Eq. \eqref{eq:lfFinal} as the initial point}
\label{step:sec_ogm}\;
\emph{Find the local optimal distribution $\cbra{P^{(s)}, \Kcal_s}$ by optimizing the cost function Eq. \eqref{eq:lfFinal} with Matlab optimization package}\;
\emph{Sample $T$ measurements from the optimized distribution with Algorithm \ref{alg:ParDerand}}\;
\emph{Calculate $v$ with Eq. \eqref{eq:derandCS}}\;
\caption{{Overlapped grouping measurement.}}
\label{alg:TotalOGM}
\end{algorithm}
\DecMargin{1em}
{In Step \ref{step:sec_ogm} of Algorithm \ref{alg:TotalOGM}, we begin the optimization process from a better initialized probabilities by picking the distribution with the minimum cost function from uniformly randomly selected 10 distributions around the initialized distribution $\Kcal_s$ from Step \ref{step:first_ogm}. To show the robust advantages of the OGM algorithm, we directly choose the  probabilities initialized in Algorithm \ref{alg:OverlapSet} without performing Step \ref{step:sec_ogm} to give the optimized measurement distributions, and outputs the errors for estimations with 1000 samples in Table \ref{tab:ErrorComp_main}.}

We compare the measurement schemes for different molecular Hamiltonians, ranging from $4$ to $16$ qubits. We first consider the molecular Hamiltonian measurement on the ground state of molecular Hamiltonians,  in which the fermionic Hamiltonians are mapped to the qubit ones under the Jordan-Wigner (JW) transformation and
the number of terms in the molecular Hamiltonians scales quartically to the system size. In practice,  the cost function in Eq.~\eqref{eq:diag_var} might lead to the optimized result for the probability distributions trapped in the local minimum. Here, we address this problem by adding an additional disturbance term in the cost function to jump out of local minima.

We compare the estimation error (averaged over 100 independent tests) using $1000$ measurement samples in Table \ref{tab:ErrorComp_main}. {Here the error is estimated with the formula $
\varepsilon_v = \sqrt{\frac{1}{N}\sum_{i = 1}^N \pbra{\hat{v}_i - \tr(\rho O)}^2}.$ The definition of $\varepsilon_v$ is also consistent with the standard deviation calculation of the estimation $v$.} It is worth mentioning that we numerically show that $N=1000$ independent experiments are sufficient to output a convinced estimation error in Appendix \ref{app:numericalError}. 
We also include the recently proposed derandomized classical shadow method, which is the current state-of-the-art method and has been numerically tested to outperform the others~\cite{huang2021efficient}. {The numerical result again shows that our OGM method achieves much higher accuracy than other methods {when the number of measurements is limited}, including the derandomized classical shadow method, verifying its significant performance in the practical computation.
The OGM algorithm has simultaneous advantages in the energy estimation under different fermion-to-qubit encodings, including  Bravyi-Kitaev (BK) and parity encodings, and we refer to Appendix \ref{app:numericalError} for the numerical results and detailed comparison under Bravyi-Kitaev and parity encodings.
}

\begin{table}[htbp]
    \centering
        \caption{{Estimation errors with 1000 samples for measuring molecular Hamiltonians (JW-encoding) on the corresponding ground states via LDF-grouping, LBCS, derandomized CS, and the OGM method, {where the initialized probability for the OGM algorithm is chosen directly from Algorithm~\ref{alg:OverlapSet}}.}}
    \begin{tabular}{c| c c c c }
    \hline
       Molecule & LDF & LBCS~\cite{hadfield2020measurements} & Derand~\cite{huang2021efficient} & OGM\\
       \hline\hline
        H$_2$($4$)  & 0.019 &0.043 & 0.018 & \textbf{0.011}\\
        \hline
        H$_2$ ($8$) & 0.149 &0.128 & 0.067 &  \textbf{0.051}\\
        \hline
        LiH ($12$)  & 0.231 & 0.122 & 0.063 &  \textbf{0.036}\\
        \hline
        BeH$_2$ ($14$)  & 0.426 & 0.275 & 0.103 &  \textbf{0.072}\\
        \hline
        H$_2$O ($14$)  & 1.090 & 0.549 & 0.257 & \textbf{0.129 }\\
        \hline
        NH$_3$ ($16$)  & 1.063 & 0.484 & 0.225 &  \textbf{0.151}\\
        \hline
    \end{tabular}
    \label{tab:ErrorComp_main}
    
\end{table}

To verify that the OGM method also has advantages when we have a large number of measurements, we show the comparison of the OGM algorithm and LDF Grouping, LBCS, and derandomized CS algorithm for errors with different  number of measurements for molecules LiH, and BeH$_2$ under the JW encoding, as shown in Fig.~\ref{fig:err_molecules}.
\begin{figure}[t]
    \centering
    \includegraphics[trim = 0mm 46mm 0mm 38mm, clip=true, width = 1.05\textwidth]{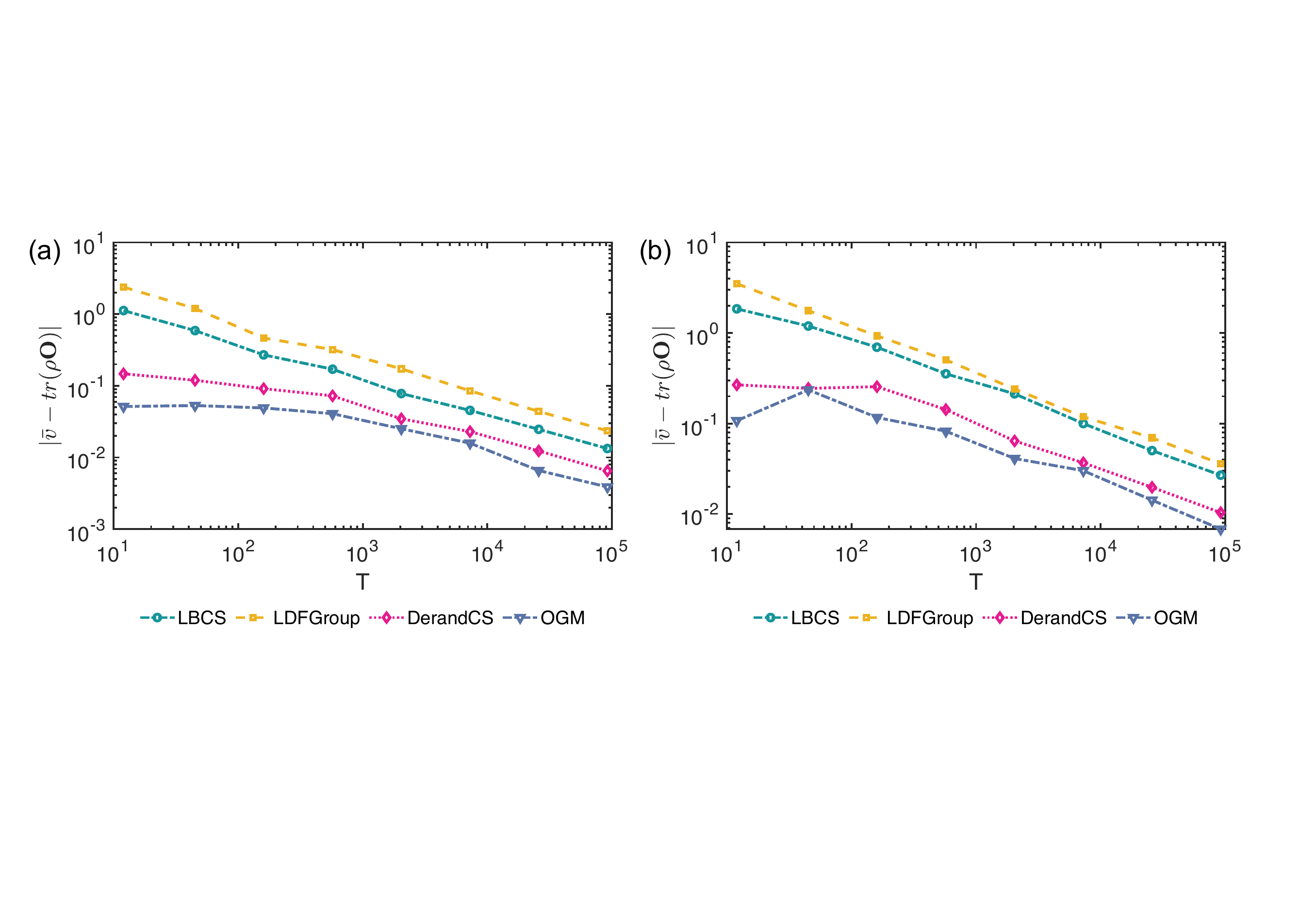}
    \caption{{The error of the estimations by the LBCS, LDF Grouping, derandomized CS, and OGM algorithm with the number of measurements for the ground state energy of molecules (a)  LiH (12 qubits), and (b) BeH$_2$ (14 qubits)  with JW-encoding.}}
    \label{fig:err_molecules}
\end{figure}
The results show that the OGM algorithm scales linearly to $1/\sqrt{T}$ for a large number of samples $T$. This numerical result is consistent with the theoretical results. 
It can be shown that the OGM algorithm has clear advantages for a large number of samples compared to other existing algorithms. We also provide the comparison of the variances of existing algorithms with OGM algorithm in Appendix \ref{app:variance_comp}.
   
\begin{figure}[htbp]
        \centering
      \includegraphics[trim = 0mm 55mm 0mm 42mm, clip=true, width = 1.0\textwidth]{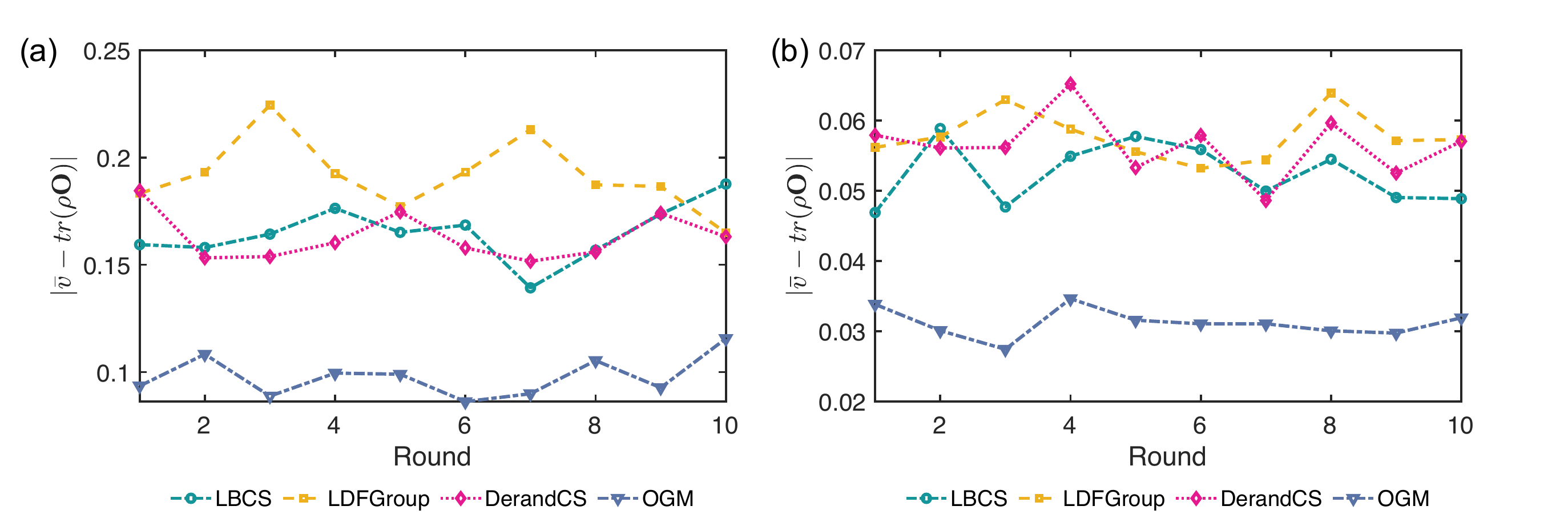}
        \caption{The error of the estimation by the LBCS, LDF Grouping, Derandomized CS and OGM algorithm for the expectation of molecule $H_2$ (8 qubits) with JW-encoding under 10 random generated 8-qubit state. (a) $T = 1000$ number of samples and (b) $T=10,000$ number of samples.}
        \label{fig:randH2}
    \end{figure}

{We also show Fig.~\ref{fig:randH2} and \ref{fig:error_rand} to illustrate that the advantage of OGM is independent of the quantum input state, where we approximate the expectation of molecule H$_2$ (8 qubits) with JW-encoding. In Fig.~\ref{fig:randH2}, we compare the errors of LBCS, LDF Grouping, Derandomized CS, and OGM algorithms under 10 random generated 8-qubit states with (a) 1000 samples and (b) 10,000 samples. In Fig.~\ref{fig:error_rand} we further show the comparison of these algorithms on a  randomly generated 8-qubit state with the increase of the number of samples, where the $x$-axis and the $y$-axis are both in logarithmic scales.
Here we choose the input quantum state as an $8$-qubit state with uniformly randomly generated real amplitudes.
}

{We additionally provide the experimental results in Appendix \ref{app:experimental_test}. The experimental results clearly show a much faster convergence of our OGM method using a few hundred of measurements, which aligns with our theoretical prediction and numerical simulation.
We can observe that our methods are practically useful even for the current generation of quantum devices.}

\section{Discussion and outlook}
\label{sec:discussion}
We  introduce a unified framework of quantum measurement that reveals the underlying mechanism of the existing advanced measurement strategies, which are seemingly distinct from each other. We further propose the overlapped grouping measurement (OGM) scheme that integrates the advantages of these typical measurement strategies.
Our numerical results suggest a significant improvement over existing advanced measurement methods. Our numerical result shows that our method already demonstrates advantages in practical problems. 
{Since the efficient quantum measurement is crucial for many quantum algorithms and quantum processing,} our work has wide applications, such as in variational quantum algorithms and quantum many-body tasks involving eigenenergy estimation~\cite{mcardle2018quantum,Cao2019Quantum,babbush2018low,arute2020hartree}, where we need to efficiently measure complicated Hamiltonians $\braket{H}$ or their moments~$\braket{H^2}$~\cite{vallury2020quantum}.
{Our method could significantly reduce the measurement cost and hence speed up the quantum computation, especially when we aim to realize quantum advantage for realistic problems.}
Moreover, our method applies to adaptive variational quantum simulation, which requires a large number of measurements in each subroutine~\cite{zhang2020low,grimsley2019adaptive}. It is expected that our measurement scheme will show more advantages with an increasing system size of great practical relevance to both theoretical and experimental tasks.

 \begin{figure}[htbp]
    \centering
    \includegraphics[trim = 0mm 85mm 0mm 72mm, clip=true, width = 0.55\textwidth]{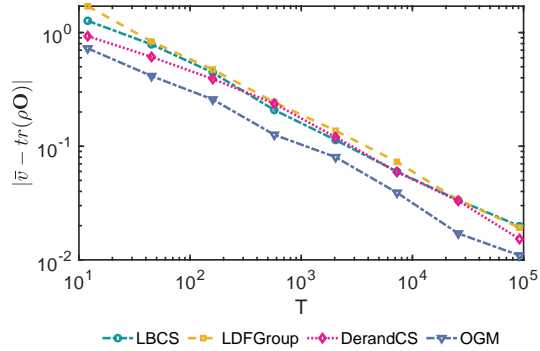}
    \caption{The error of estimation by OGM algorithm for the expectation of molecule $H_2$ (8 qubits) with JW-encoding under a randomly generated 8-qubit state.}
    \label{fig:error_rand}
\end{figure}

{The optimization goal of the OGM algorithm is completely different from the derandomized CS method, since here we utilize a partial variance as the cost function, while the derandomized CS algorithm utilizes a confidence bound. The numerical results also show that our algorithm has clear advantages for a large number of measurements. Our work considers explicit strategies for choosing the overlapped sets, which could be improved using more advanced classical algorithms.
Note that the dimension of the considered measurement space is $4^n$, and both OGM and CS variant algorithms aim to find a good distribution in this huge space. The expressivity of the CS algorithm is limited since it only explores the $3n$ size of this space.
One of the approaches is to combine OGM algorithm with CS for the molecules where parts of qubits have strong correspondence. 
Briefly speaking, we could utilize the CS method to generate $n/s$ independent subspaces, each grouped with OGM with dimension $4^S$. We leave this idea as an interesting future work. Another possible extension is to utilize neural networks to generate  samples within the OGM framework~\cite{torlai2018neural,carrasquilla2019reconstructing,torlai2020precise}}.
In our work, we assume local Pauli measurements, whereas more general measurements, such as arbitrary local measurements or entangled measurements could be considered~\cite{mcclean2016theory,izmaylov2019revising,rubin2018application,o2019generalized,Yen2019,izmaylov2019unitary,zhao2020measurement}. Several measurement schemes have been proposed by adding a polynomial-depth circuit before the local measurement to implement entangled measurements~\cite{huggins2019efficient,Yen2019,Gokhale2019}. How to extend our OGM scheme to generalized measurements is an interesting future direction. 

\section*{Acknowledgement}
We would like to thank Charles Hadfield and Antonio Mezzacapo for providing the Hamiltonians with their work~\cite{hadfield2020measurements} and sharing relevant recent papers. We thank Charles Hadfield, Antonio Mezzacapo, Xiaoming Sun, and Xiaoming Zhang for their helpful discussions.
This work is supported by the National Natural Science Foundation of China (Grant No.~12175003, No.~12147133), and Zhejiang Lab's International
Talent Fund for Young Professionals.
The numerics is supported by High-performance Computing Platform of Peking University.
We acknowledge the use of the IBMQ for
this work. The views expressed are those of the authors
and do not reflect the official policy or position of IBM
or the IBM Q team\footnote{The source code for the OGM optimization process is available at https://github.com/GillianOoO/Overlapped-grouping-measurement.}.

\emph{Note added.---}Recently, two relevant works~\cite{hadfield2021adaptive, hillmich2021decision} were posted, which introduce optimized quantum measurement schemes that generalize the classical shadow methods. Hillmich et al.~\cite{hillmich2021decision} proposed a decision diagrams method to generate an estimation and  Hadfield~\cite{hadfield2021adaptive} proposed an adaptive Pauli Shadow algorithm to generate an estimation.
While  similar problems are considered, the techniques are different and could be compared. 
{After our work, Shlosberg et al.~\cite{shlosberg2021adaptive} and Yen et al.~\cite{yen2022deterministic} apply the shadow and overlapped grouping ideas to commute measurement respectively, and their results have better performances with the cost of a polynomial-size quantum circuit.
}

\bibliographystyle{plainnat}
\bibliography{ref}

\appendix

\section{\label{app:UpperAlgs}Sample complexity}

In this section, we theoretically analyze the upper bound of the number of copies for the existing algorithms, $l_1$-sampling, grouping, and classical shadow algorithm theoretically. We further provide the variance of the derandomization algorithm and show the relation to our overlapped grouping measurement method.

Let $\hat{v}_1, \ldots, \hat{v}_T$ be the estimations after $T$ independent samples. Let $\hat{v} = \pbra{\hat{v}_1 + \cdots + \hat{v}_T}/T$ be the expectation of these samples.
Then by Chebyshev inequality, we have
\begin{align}
    \Pr\sbra{\abs{\hat{v} - \tr(\rho Q)} \geq \varepsilon }\leq \frac{\text{Var}(\hat{v})}{\varepsilon^2} = \frac{\text{Var}\pbra{\hat{v}_1}}{T\varepsilon^2}.
\end{align}
Hence the error can be bounded to $\varepsilon$ with probability $\delta$ when the number of samples is Var$\pbra{\hat{v}_1}/\pbra{\delta\varepsilon^2}$. By the definition 
 of $\hat{v}$ in Eq. \eqref{eq:UnifiedRep},
the variance of $\hat{v}$ equals $\Ebb\sbra{\hat{v}^2}-\tr(\rho \Omat)^2\leq \Ebb\sbra{\hat{v}^2}$. 

\emph{$l_1$-sampling.---}
The variance of $\hat{v}_1$ generated by $l_1$-sampling can be bounded to
\be
\text{Var}\pbra{\hat{v}_1}\leq \sum_{j=1}^m \alpha_j^2 \frac{\vabs{\alpha}_1}{\abs{\alpha_j}} =\vabs{\alpha}_1^2 . 
\ee
Hence $T \geq \frac{\vabs{\alpha}_1^2}{\delta\varepsilon^2}$ suffices to give an estimation with error less than $\varepsilon$  and success probability $1-\delta$.

\emph{Grouping.---}
The variance of $\hat{v}_1$ generated by grouping method satisfies
\begin{align}
\begin{aligned}
\text{Var}\pbra{\hat{v}_1} &\leq \sum_{j,k}\alpha_j \alpha_k f_{\text{Group}}\pbra{Q^{(j)}, Q^{(k)}, \Kcal}\tr\pbra{\rho Q^{(j)}Q^{(k)}}\\
&\leq \pbra{ \sum_{j}\alpha_j f_{\text{Group}}(Q^{(j)}, Q^{(j)}, \Kcal)}^2 \\
&= \pbra{\sum_{k = 1}^s \frac{1}{\Kcal_j} \sum_{j:Q^{(j)}\in \bm e_k}\alpha_j}^2 \\
&= \vabs{\alpha}_1^2\pbra{\sum_{k = 1}^s\frac{1}{\vabs{\alpha_{\bm e_k}}_1}\sum_{j:Q^{(j)}\in \bm e_k} \alpha_j}^2,
\end{aligned}
\end{align}
where $\vabs{\alpha_{\bm e_k}}_1 =\sum_{j:Q^{(j)}\in \bm e_k}\abs{\alpha_{j}}.$

\emph{Classical shadow.---}
For classical shadow algorithm, the variance of the generated estimation $\hat{v}_1$ satisfies
\begin{align}
\begin{aligned}
\text{Var}\pbra{\hat{v}_1} &\leq \sum_{j,k}\alpha_j \alpha_k f_{\text{LBCS}}\pbra{Q^{(j)}, Q^{(k)}, \Kcal}\tr\pbra{\rho Q^{(j)}Q^{(k)}} \\
&\leq \pbra{ \sum_{j}\alpha_j \prod_{k = 1}^{n}\frac{1}{\chi_k\pbra{Q_k^{(j)}}}}^2 \\
&\leq 3^{\text{local(\Omat)}}\pbra{\sum_{j=1}^m \alpha_j}^2 
\end{aligned}
\end{align}
where local$(\Omat) = \max_{j=1}^m \text{ local}(Q^{(j)})$, and the locality of $Q^{(j)}$: local$\pbra{Q^{(j)}} = \# \cbra{k |Q^{(j)}_k \ne \Ibb}$ is the number of qubits $k$ such that $Q^{(j)}_k$ is not identity.
Therefore, $T \geq \frac{3^{\text{local}(\Omat)}\pbra{\sum_j\alpha_j}^2}{\delta\varepsilon^2}$ suffices to give an estimation with  error $\varepsilon$ and success probability $1-\delta$.

\section{LDF Grouping method\label{app:LDFGroup}}

In the LDF Grouping method, we mapped the observables and the ``compatible with'' relationship to a graph. In specific, we denote an observable as a vertex, and if two observables do not have any ``compatible with'' relationship, we connect them with an edge. Then we can obtain a graph $G(V,E)$, where the number of vertices is equal to $m$. Next, we can proceed the grouping method as follows.

\begin{itemize}
\item [(1)] Sorting vertices $V_{1},V_{2},\ldots, V_{m}$  in the descending order of its degree.
\item [(2)] Repeat the following step until all of the vertices are in one of the sets.
\item [(3)] For $j$ goes from $1$ to $m$, if $V_{j}$ is not in any set, then add $V(j)$ to a set such that there is no edge between $V(j)$ and any other vertices in this set. If such a set does not exist, then add $V(j)$ to a new set.
\end{itemize}
After the above process and changing $V_{i}$ into $Q^{(i)}$, we can generate the grouping sets which satisfy any two observables are compatible with each other.

\section{Greedy overlapped grouping strategy\label{app:OGMGroup}}

Aside from the overlapped set generation strategy in Algorithm 1 of the main text, we proposed an alternative strategy that is slightly different in selecting the observables for a set, denoted as Grouping version 2. The main difference is in ``the sequence of observables'' adding to a new set.  The new strategy has a potentially better performance but it needs more time since it has a larger number of sets.
We further add a token to each observable to represent how many times we can visit this observable to avoid the explosion of the number of sets. Note that if there is no restriction, the observables in the tail of the sequence will be much more difficult to be added to existing sets, hence the number of sets could be very large.

Let the token of an observable $Q^{(k)}$ be $U_k = 2^{d-1}$, where $d$ is the number of digits of $\floor{\abs{\alpha_k}/\text{Min}{\text{Weight}}}$, and $\text{Min}{\text{Weight}} = \min_{j\leq m} \abs{\alpha_j}$ is the minimum weight.
This new version of grouping strategy is depicted in Algorithm~\ref{alg:OverlapSetV2}.

\SetNlSty{textbf}{}{\quad}
\IncMargin{1em}
\begin{algorithm}
\SetKw{KwTo}{to}
\SetKwFor{While}{while}{:}{}
\SetKwFor{For}{for}{:}{}
\SetKwIF{If}{ElseIf}{Else}{if}{:}{elif}{else:}{}
\SetKwInOut{Input}{Input}
\SetKwInOut{Output}{Output}
\Input{$n,m$ and $Q^{(1)},\ldots, Q^{(m)}, \alpha_1,\ldots, \alpha_m$, $U_1,\ldots, U_m$.}
\Output{$\cbra{P^{(s)}}$ with initial probabilities $\cbra{\Kcal_s}$.}
\emph{Sorting all of the observables $\{Q^{(j)}\}$ to the descending order according to their weights $\abs{\alpha_j}$}\;
$j\leftarrow 1$ and $s\leftarrow 1$\;
\While{$\exists Q^{(j)}$ which is not in any sets}{
\qquad\emph{Let $Q^{(j)}$ be the first observable in the sorted sequence which has never appeared in any sets and add it into a new set $\bm e_s$}\;
\qquad\emph{$s\leftarrow s+1$}\;
\qquad\emph{Initialize the measurement of $\bm e_s$ as $P^{(s)}\leftarrow Q^{(j)}$}\;
\qquad\For{$k\leftarrow 1$ \KwTo $m$ }{

\qquad\qquad\If{$Q^{(k)}$ is compatible with $P^{(s)}$ and $k\ne j$, the number of sets $Q^{(k)}$ appeared is less than token $U_k$}{
\qquad\qquad\qquad\emph{Add $Q^{(k)}$ into set $\bm e_s$, and update $P^{(s)}$ to $ P^{(s)} \bigvee Q^{(k)}$}
\tcc*{$P = Q\bigvee R$ is defined as  $P_i = Q_i$ if $Q_i = R_i$ and $P_i = Q_iR_i$ otherwise.}
    }
}
\qquad\emph{Let the initial probability of $P^{(s)}$ be the summation of the weight of all observables in this set}\;
\qquad\For{$k\leftarrow 1$ \KwTo $j-1$}{
\qquad\qquad\If{$Q^{(k)}$ is compatible with $P^{(s)}$ and not in $\bm e_s$}{
\qquad\qquad\qquad\emph{Add $Q^{(k)}$ into set $\bm e_s$, and update $P^{(s)}$ to $ P^{(s)} \bigvee Q^{(k)}$}\;
}
}
}
\caption{Second version of overlapped set generation.
}
\label{alg:OverlapSetV2}
\end{algorithm}
\DecMargin{1em}

 We compared the error of the estimation generated by grouping versions 1 and 2 in Table \ref{tab:ComGroupVersion}. The table shows that when we take more consideration for the observables with larger weight, we have better optimized results, while this would give us a longer optimization time because of the expansion of sets (optimized parameters).

\begin{table}[htbp]
    \centering
        \caption{Comparison for the errors of the OGM algorithms with grouping version 1 and 2. The error is estimated with 1000 samples for the ground state of the corresponding molecules under JW-encoding.}
    \begin{tabular}{c|c c c c c c}
    \hline
       molecule  &  H$_2$ (4) & H$_2$ (8) & LiH & BeH$_2$ & H$_2$O & NH$_3$\\
       \hline\hline
       error (V1)  & 0.011 & 0.051 & 0.036 & 0.072 & 0.129 & 0.151\\
       \hline
       error (V2) & 0.013 & 0.047 & 0.021 & 0.051 & 0.121 & 0.115
       \\
       \hline
    \end{tabular}
    \label{tab:ComGroupVersion}
\end{table}

\section{Variance for the partially derandomized strategy}
\label{appendix:derand}
Suppose that we have determined the measurement basis set $\Mcal$, in which the number of measurement basis  $P^{(k)}$ is assigned as $M_k$, and the total number of measurements is $T = \sum_k M_k$. We show the variance for the partially derandomized strategy with given measurements $T$.
We denote $\bm e_k$ containing all $Q^{(k)}$ element-wise commute with basis $P^{(k)}$, and denote $s_k$ as the total number of times that $Q^{(k)}$ is effectively measured, which is given by
$s_k = \sum_{P\in\Mcal} \delta_{Q^{(k)}\triangleright P}$.  Let ${t}_{j,k}$ be the measurement outcome of the $j$th observable $Q^{(j)}$ measured with the basis $P^{(k)}$ is $+1$. The measurement outcome associated with the measurement $P^{(k)}$ for observable  $Q^{(j)}$ is thus $\hat{v}_{j,k} = 2{t}_{j,k}/M_k - 1$. As such, the estimator can be expressed by
\begin{equation}
    \hat v = \sum_k \hat{v}_k =   \sum_k \sum_{j:Q^{(j)} \in \bm{e}_k }  \frac{\alpha_j M_k}{s_j}   \hat{v}_{j,k}.
    \label{eq:estimate_derand}
\end{equation}
One can check that if $M_k>0$ ($\forall k$) holds, i.e., every observable is assigned at least one measurement basis (one sample), the estimation in Eq.~\eqref{eq:estimate_derand}
is unbiased.

The variance of the estimator $\hat v$ is given by
\begin{equation}
\textrm{Var}[\hat v] = \sum_{k} M_{k} \sum_{j,j':Q^{(j)},Q^{(j')} \in \mathcal
S_{k}} \frac{1}{s_{j} s_{j'}} \alpha_{j} \alpha_{j'} \operatorname{Cov}\left(\hat{v}_{j,k}, \hat{v}_{j',k}\right).
\label{eq:derand}
\end{equation}
Here, we use the fact that measurement outcomes obtained from different  $P^{(k)}$ are independent  since the measurements $P^{(k)}$ are independent of each other.
We also note that the outcomes $\hat{v}_{j,k}$ are correlated, so the variance in Eq.~\eqref{eq:derand} depends on the covariance  $\operatorname{Cov}\left(\hat{v}_{j,k}, \hat{v}_{j',k}\right)$.

\section{Relationship between cost function and variance}
\label{app:CFandVariance}
Let $o_j := f(P,Q^{(j)}, \Kcal)\mu\pbra{P,Q^{(j)}}$ be the estimation of $\tr\pbra{\rho Q^{(j)}}$.
The cost function in the manuscript is selected as
$\sum_j{\alpha_j}^2\Ebb\sbra{o_j^2} = \sum_{j} {\alpha_j}^2/\chi(Q^{(j)})$, to evaluate $\Var{v}$.
In the following, we prove that $\Var{v}\leq m\sum_j{\alpha_j}^2\Ebb\sbra{o_j^2}$.

By Eq. \eqref{eq:VarOGM}, we have
\begin{align}
    \mathrm{Var}(\hat{v})&\leq\sum_{j,k} \alpha_j\alpha_k g(Q^{(j)},Q^{(k)})\tr\pbra{\rho Q^{(j)}Q^{(k)}}
    \label{eq:var_rho}
    \\
    &= \sum_j \alpha_j^2 \Ebb\sbra{o_j^2} + \sum_{j\ne k} \alpha_j \alpha_k \frac{\sum_{P:Q^{(j)}\triangleright P, Q^{(k)}\triangleright P}\Kcal(P)}{\chi(Q^{(j)})\chi(Q^{(k)})}\tr\pbra{\rho Q^{(j)}Q^{(k)}}
\end{align}
and 
\begin{align}
    &\sum_{j\ne k} \alpha_j \alpha_k \frac{\sum_{P:Q^{(j)}\triangleright P, Q^{(k)}\triangleright P}\Kcal(P)}{\chi(Q^{(j)})\chi(Q^{(k)})}\tr\pbra{\rho Q^{(j)}Q^{(k)}}\\
    &\leq \sum_{j\ne k} \alpha_j \alpha_k \frac{\sum_{P:Q^{(j)}\triangleright P, Q^{(k)}\triangleright P}\Kcal(P)}{\chi(Q^{(j)})\chi(Q^{(k)})}\\
    &\leq \sum_{j\ne k} \alpha_j \alpha_k \frac{\min(\chi(Q^{(j)}),\chi(Q^{(k)}))}{\chi(Q^{(j)})\chi(Q^{(k)})}\\
    &=\sum_{j\ne k} \alpha_j \alpha_k \frac{1}{\max(\chi(Q^{(j)}),\chi(Q^{(k)}))}\\
    &\leq \sum_{j< k} (\alpha_j^2 + \alpha_k^2) \frac{1}{\max(\chi(Q^{(j)}),\chi(Q^{(k)}))}\\
     &\leq \pbra{m-1}\sum_j \alpha_j^2\frac{1}{\chi(Q^{(j)})}\\
     \label{eq:var_bounded}
\end{align}

As shown in Eq. \eqref{eq:var_rho}, $\Ebb\sbra{o_jo_k}$ is associated with the quantum state $\rho$, and it is unknown for us, while it can be bounded by Eq. \eqref{eq:var_bounded}. Hence we utilize $\sum_{j}\abs{\alpha_j}^2 \Ebb\sbra{o_j^2}$ as our cost function.

\section{Variance comparison}
\label{app:variance_comp}
We compare the variances of different measurement schemes in Table~\ref{tab:VarianceComp_main}, {where the initial point is directly chosen from Algorithm \ref{alg:OverlapSet}}. {Here we generate the measurement distribution by choosing $T = 1000$ in Eq. \eqref{eq:lfFinal}.}
With a negligible small error $\varepsilon_0$, we find that our method has a much smaller variance than that of  LDF-grouping, and LBCS. The improvement becomes more prominent for larger molecules (approximately one order compared to the classical shadow methods), which indicates its effectiveness for large practical problems {with a limited number of measurements.}

\begin{table}[htbp]
\centering
    \caption{{Estimation variances computed on the ground state of the molecular Hamiltonian (under the JW-encoding). The number of qubits is noted in the bracket in the first column, from 4 qubits to 14 qubits. 
    The first {3} columns are the variance of  LDF-grouping and OGM algorithms. 
    The last column is the initial error of OGM.}}
    \begin{tabular}{c| c c c| c}
    \hline
    \multirow{2}{*}{Molecule} & 
    & \multicolumn{2}{c|}{variance} & $\varepsilon_0$\\
    \cline{2-5}
   & LDF & LBCS~\cite{hadfield2020measurements} & OGM & OGM\\ 
       \hline\hline
        H$_2$ ($4$) &0.402 &1.86 & \textbf{0.424} & 0\\
        \hline
        H$_2$ ($8$) & 22.3 & 17.7 & \textbf{5.51} & 0\\
        \hline
        LiH ($12$) & 54.2 & 14.8& 
        \textbf{3.09} & $9.6\cdot 10^{-5}$\\
        \hline
        BeH$_2$ ($14$) & 135 & 67.6 & \textbf{15.44} & $6.6\cdot 10^{-4}$\\
   \hline
        H$_2$O ($14$) & 1040 & 258 & \textbf{39.64} & 0.053\\
        \hline
    \end{tabular}
    \label{tab:VarianceComp_main}
\end{table}

\section{Numerical results and discussions\label{app:numericalError}}
In this section, we numerically show the advantages of our OGM algorithm compared with $l_1$-sampling, LDF Grouping, LBCS, and the derandomized CS algorithm by computing the corresponding variances and errors.

Table \ref{tab:VarianceComp} shows the comparison of variance of OGM, $l_1$-sampling, LDF Grouping, and LBCS algorithms under different fermionic-to-qubit encodings, including JW, bk, and parity encodings. The last column is the deviation of the OGM algorithm after a small perturbation of the cost function as introduced in the main text.

\begin{table}[htbp]
\centering
    \caption{Variance of estimations computed on the ground state of the molecular Hamiltonian with different encoding methods. The first 4 columns are the variance of $l_1$-sampling, LDF-grouping, and OGM algorithms. The last column is the corresponding initial error of algorithm OGM.}
    \begin{tabular}{c| c c c c| c}
    \hline
    \multirow{2}{*}{Molecule} & 
    & \multicolumn{3}{c|}{variance} & $\varepsilon_0$\\
    \cline{2-6}
    & $l_1$ & LDF & LBCS~\cite{hadfield2020measurements} & OGM & OGM\\ 
       \hline\hline
        H$_2$($4$jw) &2.536 &0.402 &1.86 & \textbf{0.424} & 0\\
        H$_2$($4$bk) & 2.539 &0.193 &0.541 & \textbf{0.297} & 0\\
        H$_2$($4$parity)  &2.539 &0.193 &0.541 & \textbf{0.297} & 0\\
        \hline
        H$_2$ ($8$jw) & 119.8 & 22.3 & 17.7 & \textbf{5.51} & 0\\
        H$_2$ ($8$bk) & 124.9 & 38.4 &  19.5 & \textbf{5.66} & $8\cdot 10^{-6}$\\
        H$_2$ ($8$parity)& 124.9 & 38.0 & 18.9 & \textbf{6.96} & 0\\
        \hline
        LiH ($12$jw)& 145.4 & 54.2 & 14.8& 
        \textbf{3.09} & $9.6\cdot 10^{-5}$\\
        LiH ($12$bk)& 138.5 & 75.5 &  68.0 & \textbf{3.53} &$1.74\cdot 10^{-3}$\\
        LiH ($12$parity)& 138.5 & 85.8 & 26.5 & \textbf{5.52} & $3.03 \cdot 10^{-4}$\\
        \hline
        BeH$_2$ ($14$jw) & 453.4 & 135 & 67.6 & \textbf{15.44} & $6.6\cdot 10^{-4}$\\
        BeH$_2$ ($14$bk) & 464.9 & 197& 238 & \textbf{17.84} & $3.6\cdot 10^{-3}$\\
        BeH$_2$ ($14$parity) & 446 & 239 & 130 & \textbf{17.28} & $3.42\cdot 10^{-3}$\\
        \hline
        H$_2$O ($14$jw) & 4367 & 1040 & 258 & \textbf{39.64} & 0.053\\
        H$_2$O ($14$bk) & 5141 & 2090 & 1360 & \textbf{81.59}& 0.065\\
        H$_2$O ($14$parity) & 5017 & 2670 & 429 & \textbf{42.91} & 0.065\\
        \hline
    \end{tabular}
    \label{tab:VarianceComp}
\end{table}
  
We compare the estimation accuracy with $1000$ measurements for molecules H$_2$, LiH, BeH$_2$ and H$_2$O with JW, BK and parity encodings in Table \ref{tab:ErrorComp}, where the initialized probability for the OGM algorithm is chosen directly from Algorithm \ref{alg:OverlapSet}.

\begin{table}[htbp]
    \centering
        \caption{Error of estimations with 1000 samples for the ground state of the corresponding molecules with LDF-grouping, LBCS, Derandmized, and OGM algorithms.}
    \begin{tabular}{c| c c c c }
    \hline
       Molecule & LDF & LBCS~\cite{hadfield2020measurements} & Derand~\cite{huang2021efficient} & OGM\\
       \hline\hline
        H$_2$($4$jw)  & 0.019 &0.043 & 0.018 & \textbf{0.011}\\
        H$_2$($4$bk)  & 0.014&0.025 & 0.029 & \textbf{0.016}\\
        H$_2$($4$parity) & 0.016 & 0.016 & 0.027 &  \textbf{0.017}\\
        \hline
        H$_2$ ($8$jw) & 0.149 &0.128 & 0.067 &  \textbf{0.051}\\
        H$_2$ ($8$bk) & 0.203 &0.143 & 0.049 &  \textbf{0.041}\\
        H$_2$ ($8$parity) & 0.214 & 0.138 & 0.058 &  \textbf{0.053}\\
        \hline
        LiH ($12$jw)  & 0.231 & 0.122 & 0.063 &  \textbf{0.036}\\
        LiH ($12$bk)  & 0.293 & 0.283 & 0.067 &  \textbf{0.033}\\
        LiH ($12$parity)  & 0.294 & 0.145 & 0.061 & \textbf{ 0.039}\\
        \hline
        BeH$_2$ ($14$jw)  & 0.426 & 0.275 & 0.103 &  \textbf{0.072}\\
        BeH$_2$ ($14$bk)  & 0.462 & 0.527 & 0.104 & \textbf{ 0.082}\\
        BeH$_2$ ($14$parity) & 0.547 & 0.355 & 0.104 & \textbf{ 0.068}\\
        \hline
        H$_2$O ($14$jw)  & 1.090 & 0.549 & 0.257 & \textbf{0.129 }\\
        H$_2$O ($14$bk)  & 1.619 & 1.073 & 0.274 & \textbf{0.174 }\\
        H$_2$O ($14$parity)  & 1.476 & 0.708 & 0.336 &  \textbf{0.132}\\
        \hline
        NH$_3$ ($16$jw)  & 1.063 & 0.484 & 0.225 &  \textbf{0.151}\\
        NH$_3$ ($16$bk)  & 0.820 & 0.571 & 0.282 &  \textbf{0.103}\\
        NH$_3$ ($16$parity) & 2.018 & 0.663 & 0.294 &\textbf{ 0.173}\\
        \hline
    \end{tabular}
    \label{tab:ErrorComp}
\end{table}

\section{Error of the estimation\label{app:errorAnaly}}

We leverage root-mean squared error to quantify the error of the estimation. 
In each iteration, we generate an estimation $\hat{v}_i$ by independently performing $T$ measurements of initial state $\rho$.
For $N$ independent repetitions, we get the average error of $T$ samples as
\be
\varepsilon_{v} = \sqrt{\frac{1}{N}\sum_{i = 1}^N \pbra{\hat{v}_i - \tr(\rho O)}^2}.
\ee
We plot the figure to show that the average error will fluctuate in a small range after more than 10 iterations.

\begin{figure}[htbp]
    \centering
    \includegraphics[width = 0.56\textwidth, trim = 1cm 6cm 1cm 6cm, clip]{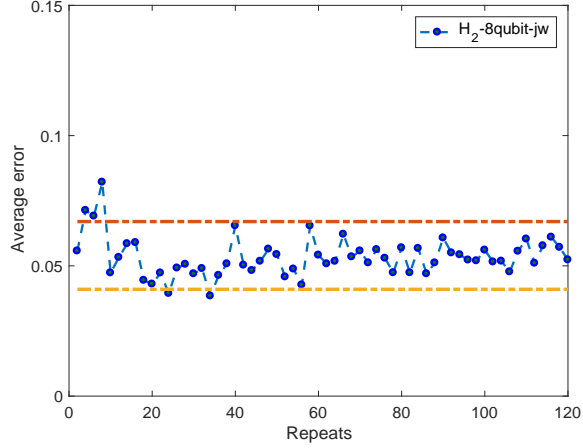}
    \caption{The average errors of different repetitions $N$ for H$_2$ molecule (8-qubit).}
    \label{fig:RepeatError}
\end{figure}
In our simulation experiment of the main file (Table \ref{tab:ErrorComp_main} and Fig. \ref{fig:err_molecules}), we choose $N = 100$ for the molecules in which the number of qubits is less than 14. For NH$_3$ molecule, we let $N = 20$. 
Nevertheless, the experimental result on the IBM Q device is an average of 12 rounds due to the high cost of the pending time.

\section{Experimental results}
\label{app:experimental_test}
 
{The numerical study ignored device errors, and how the noise in realistic hardware affects the measurement efficiency is critical for studying their practical performance with realistic quantum devices. 
To further demonstrate the advantage of our OGM method with current quantum devices, we implement and compare the measurement schemes on the IBM quantum cloud hardware with device imperfections.} We aim to estimate $\tr(\ket{\psi}\bra{\psi} \Omat_{H_2})$ with the GHZ state $\ket{\psi} = ({\ket{0000} + \ket{1111}})/{\sqrt{2}}$
 and the four-qubit Hamiltonian $\Omat_{H_2}$ of the $H_2$ molecule under the JW-encoding. We note that the GHZ state has a much larger variance compared to the ground state, and thus could be a suitable testbed to compare the performance of different measurement schemes. In Fig.~\ref{fig:CompIBM}, we compare estimation errors using the $l_1$-sampling, LDF-grouping, LBCS, derandomized CS, and the OGM method with a different number of copies (samples) of the prepared entangled state. We evaluate the error by comparing the reference results obtained using the OGM method with 49140 samples.

 \begin{figure}[t]
    \centering
    \includegraphics[width = 0.8\textwidth]{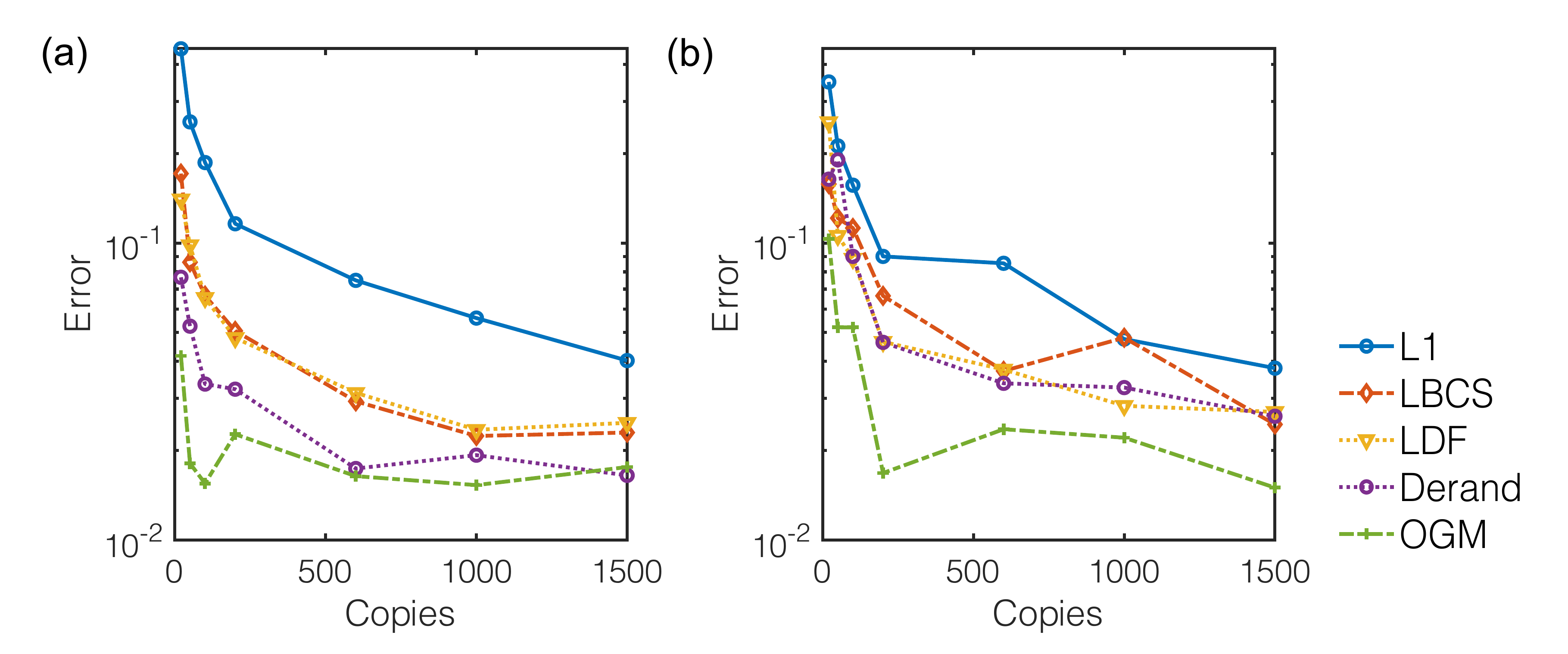}
    \caption{Comparison of OGM with $l_1$-sampling, LDF-grouping, LBCS, and derandomized CS using  (a) the IBM classical simulator (Repeat 100 rounds) and (b) the IBMQ cloud devices ibmq\_athens (Repeat 12 rounds).}
    \label{fig:CompIBM}
\end{figure}

\end{document}